\newcommand{\val}[2]{{v_{#1}(#2)}}
\newcommand{\alloc}[2]{x_{#1}(#2)}
\newcommand{\lowsig}{0}
\newcommand{\qual}{q}
\newcommand{\qVec}{\mathbf{q}}
\newcommand{\quality}[1]{\qual(#1)}
\newcommand{\sVec}{\mathbf{s}}
\newcommand{\pVec}{\mathbf{p}}
\newcommand{\mechanism}{\mathcal{A}}
\newcommand{\agents}{N}
\newcommand{\opta}[1]{h(#1)}
\newcommand{\nExperts}{\ell}
\newcommand{\clockStrategy}{{consistent bidding}}
\begin{document}
\title{Prior-Free Clock Auctions for Bidders\\ with Interdependent Values\thanks{The first and last authors were partially supported by NSF grants CCF-2008280 and CCF-1755955. The second author was supported by an REU through CCF-1755955.}}
%
%\titlerunning{Abbreviated paper title}
% If the paper title is too long for the running head, you can set
% an abbreviated paper title here
%
\author{
Vasilis Gkatzelis \and
Rishi Patel \and
Emmanouil Pountourakis \and
Daniel Schoepflin}
\authorrunning{V. Gkatzelis et al.}
% First names are abbreviated in the running head.
% If there are more than two authors, 'et al.' is used.
%
\institute{Drexel University, Philadelphia, USA\\
\email{\{gkatz,riship,manolis,schoep\}@drexel.edu}\\}
\maketitle              % typeset the header of the contribution
\begin{abstract}
We study the problem of selling a good to a group of bidders with interdependent values in a prior-free setting. Each bidder has a signal that can take one of $k$ different values, and her value for the good is a weakly increasing function of all the bidders' signals. The bidders are partitioned into $\nExperts$ expertise-groups, based on how their signal can impact the values for the good, and we prove upper and lower bounds regarding the approximability of social welfare and revenue for a variety of settings, parameterized by $k$ and $\nExperts$. Our lower bounds apply to all
ex-post incentive compatible mechanisms and our upper bounds are all within a small constant of the lower bounds. Our main results take the appealing form of ascending clock auctions and provide strong incentives by admitting the desired outcomes as \emph{obvious ex-post equilibria}.

\keywords{Clock auctions  \and Interdependent values \and Obvious ex-post equilibrium.}
\end{abstract}

\section{Introduction}
We study the problem of selling a good to bidders with \emph{interdependent values}, which has received a lot of attention in economics (e.g., see \cite[Chapters~6 and 10]{krishna2009auction}), and recently also in computer science (e.g., \cite{roughgarden2016optimal,chawla2014approximate,eden2019combinatorial,eden2018interdependent,eden2020price,constantin2007online,ito2006instantiating,robu2013efficient}). In contrast to the  \emph{private values} model, where each bidder knows her value for the good being sold, the interdependent value literature assumes that each bidder has some private signal regarding the value of the good, e.g., through some research or technical expertise, and the actual value of the good to each bidder is a function of all the bidders' signals. For instance, a common motivating example for this problem involves firms competing over the mineral rights of a piece of land \cite{wilson1969competitive}: each firm has conducted some tests, trying to estimate the land's capacity in desired minerals, but each of these tests may provide only partial evidence, and the best estimate can be inferred by appropriately aggregating all the test results, e.g., by computing the average across all of these measurements.

The main difficulty when designing auctions for bidders with interdependent values arises from the fact that the bidders' signals are not known to the auctioneer, or to the other bidders. Therefore, the auctioneer needs to elicit these signals before deciding who should win the item and what the price should be. But, why would any bidder reveal her true signal to the auctioneer? A sealed-bid auction is said to be \emph{ex-post incentive compatible} if truth-telling, i.e., reporting the true signal to the auctioneer, is an equilibrium for all the bidders. Designing ex-post incentive compatible auctions with non-trivial welfare or revenue guarantees has been a central goal of this line of research.

Prior work has considered several different ways in which the bidders' values can depend on the vector of signals. For example, in the \emph{common value} model all the bidders have the same value for the good but, even in this special case, the design of ex-post incentive compatible auctions is a non-trivial problem. This problem becomes even harder when the bidders' values can differ. To enable the design of efficient incentive compatible mechanisms, prior work has introduced useful restrictions on the structure of these valuation functions, such as \emph{submodularity over signals} (SOS) \cite{eden2019combinatorial,SOSimproved}, or constraints across pairs of valuation functions, such as the \emph{single-crossing} property~\cite{milgrom1982theory,maskin1992auctions}.

In this paper, we consider a variety of settings with interdependent values that are not captured by (approximate) SOS or the single-crossing property. We let $k$ be the number of possible values that a bidder's signal can have, and we partition the bidders into $\nExperts$ expertise-groups, depending on the type of information that their signals provide regarding the good being sold. Using these parameters, we prove upper and lower bounds, parameterized by $k$ and $\nExperts$, on the extent to which auctions can approximate the optimal welfare or revenue. All our proposed auctions are ex-post incentive compatible, but our main results also satisfy stronger incentive guarantees: they can be implemented not only as direct-revelation mechanisms (sealed-bid auctions), but also as ascending clock auctions, and they admit the desired outcomes as \emph{obvious ex-post equilibria}~\cite{li2017obviously} which are easy for the bidders to verify, thus leading to more practical solutions.

\subsection{Our Results}
We begin, in Section~\ref{sec:binary}, by considering the interesting case where each bidder's signal regarding the quality of the good can take two possible values, either ``low'' or ``high'', and each bidder's value is a weakly increasing function of these signals. If the valuation function of each bidder is symmetric, i.e., every bidder's signal matters the same, then we provide a clock auction that achieves a $5$-approximation of the optimal social welfare, and a variation of that auction that guarantees revenue that is a $10$-approximation of the optimal social welfare. 
We then generalize these results to non-symmetric functions, where the bidders are partitioned into $\nExperts$ groups based on their expertise, and signals from different groups may have different impact on the values. Our generalization achieves a $5\nExperts$-approximation for social welfare and a $10\nExperts$-approximation for revenue.

In Section~\ref{sec:linear}, we go beyond the case of binary signals and consider problem instances with $k$ distinct signal value options, $\{0, 1, \dots, k-1\}$, allowing for the bidders' signals regarding the quality of the good to be more refined.
The valuation of each bidder can be an arbitrary weakly increasing function of the average quality estimate of each group.
Using a reduction to the binary case, we design a clock auction that achieves a $5\nExperts(k-1)$-approximation for social welfare and a $10\nExperts(k-1)$-approximation for revenue. To complement these positive results, we also prove  a lower bound of $\nExperts(k-1)+1$ for the welfare approximation ratio of ex-post incentive compatible auctions.

Our auctions in these two sections achieve signal discovery using random sampling, while minimizing the probability of rejecting the highest value bidder.
Unlike prior work, our random sampling process is adaptive, depending on prior signal discovery. 
Thus, our auction gradually refines our estimate of the item's quality as perceived by the bidders and eventually decides who to allocate to, aiming to achieve high welfare and revenue. 
Apart from matching the lower bound up to small constants, these auctions crucially also guarantee improved incentives: they admit the desired outcome not just an ex-post equilibrium, but as an \emph{obvious} ex-post equilibrium, making our upper bounds stronger.

Finally, in Section~\ref{sec:general} we consider the most general setting with any number of signals $k>2$  and arbitrary quality functions per expert type. We first prove a stronger lower bound of $\nExperts \binom{k}{2}+1$ for the welfare approximation of ex-post incentive compatible auctions. Then we prove the existence of a universally incentive compatible and individually rational auction that matches this bound.  

Due to space constraints, the proofs of some theorems (particularly those which are similar to previous proofs) have been deferred to the appendix.

\subsection{Related Work}

In an interdependent values setting, a bidder's value for a good may depend on how much others value it. This idea is formally captured by the canonical interdependent values model given by Milgrom and Weber \cite{milgrom1982theory}.  The interdependent values setting has been well-studied in the economics literature for its descriptive ability to capture many real-world scenarios.  Noted examples in the literature include the mineral rights \cite{wilson1969competitive} and common value (e.g., ``wallet game'') models \cite{klemperer1998auctions} discussed above, and the resale model \cite{myerson1981optimal} in which the value a bidder has for a good (e.g., a painting) depends on her own value for the good and the amounts others may be willing to pay on its resale.

A common assumption when studying the interdependent values setting in both the computer science and economics literature is that the valuations of the bidders satisfy  a \emph{single-crossing} condition. Following the definition of Roughgarden and Talgam-Cohen \cite{roughgarden2016optimal}, a set of valuation functions satisfies single-crossing if for all bidders $i$ and $j$ \[\frac{\partial v_i(s_i, \sVec_{-i})}{\partial s_i} \geq \frac{\partial v_j(s_i, \sVec_{-i})}{\partial s_i}.\]  Loosely speaking, single-crossing states that a bidder is more sensitive to her own signal than anyone else is.  Using this assumption, many strong results can be obtained for both welfare and revenue.  For example, Dasgupta and Maskin \cite{dasgupta2000efficient} demonstrated that the celebrated Vickrey-Clarke-Groves (VCG) mechanism can be adapted and extended into the common value setting to obtain optimal welfare given single-crossing.  Ausubel \cite{ausubel1999generalized} demonstrated that a generalized Vickrey auction can achieve efficiency in a multi-unit setting with single-crossing valuations.  For revenue, Li \cite{li2017approximation} and Roughgarden and Talgam-Cohen \cite{roughgarden2016optimal} gave, independently, auctions extracting near optimal revenue in the interdependent values model for any matroid feasibility constraint when the valuations satisfy single-crossing and the signals are drawn from distributions with a regularity-type condition.
Chawla et al. \cite{chawla2014approximate} gave an alternative generalization of the VCG auction with reserve prices and random admission which approximates the optimal revenue in any matroid setting without conditions on signal distributions.  

On the other hand, it is well-known that without single-crossing, achieving the optimal welfare becomes impossible \cite{dasgupta2000efficient,jehiel2001efficient}.  There have thus been recent efforts to \emph{approximate} the optimal welfare when the single-crossing assumption is relaxed.  Eden et al. \cite{eden2018interdependent} suggested a notion called ``$c$-single-crossing'' wherein each bidder is at most a factor $c$ times less sensitive to changes in her own signal than any other bidder is (exact single-crossing has $c = 1$). They gave a $2c$-approximate randomized mechanism when valuation functions are concave and satisfy $c$-single-crossing.  Eden et al. \cite{eden2019combinatorial} proposed an alternative notion termed ``submodularity over signals'' (SOS) which, loosely speaking, stipulates that a valuation function must be less sensitive to increases in any particular signal when the other signals are high.  The authors then gave a randomized $4$-approximate mechanism for all single-parameter downward-closed settings when valuation functions are SOS; this factor was very recently improved to 2 for the case of binary signals by Amer and Talgam-Cohen~\cite{SOSimproved}.  We note that the valuations studied in this paper satisfy neither $c$-single-crossing nor (approximate) SOS, in general. Our work proposes alternative parameterizations of the valuation functions and it provides another step toward a better understanding of interdependent values beyond the classic, and somewhat restrictive, single-crossing assumption.  

In accordance with some recent work in computer science (e.g., see \cite{eden2018interdependent,eden2019combinatorial}), and unlike much of the existing economics literature, we consider a \emph{prior-free} setting where there is no distributional information regarding the signals of the bidders. Thus, our results are in consistent with ``Wilson's doctrine''~\cite{wilson1987game},  which envisions a mechanism design process that is less reliant on the assumption of common knowledge. Our results are independent of an underlying distribution and do not assume that the auctioneer or the bidders have any information regarding each other's signals.

\section{Preliminaries}

We consider a setting where a set $\agents$ of $n$ bidders is competing to receive a good. Each bidder $i \in \agents$ has a private signal $s_i$ regarding the good being sold, which can take one of $k$ publicly known different values. Her valuation of the good, $v_i(\sVec)$, is a publicly known weakly increasing function of the vector of all the bidders' signals, $\sVec = (s_1, s_2, \dots, s_n)$. In many settings of interest it is natural to assume that this is a \emph{symmetric} function over the signals, e.g., when all the bidders have the same access to information, or the same level of expertise. However, we also consider the case when the signal of some bidders may have a different impact than others'. To capture this case we partition the bidders into $\nExperts > 1$ groups and assume that each group has different types of expertise.  In this case, the valuation functions $v_i(\sVec)$ are symmetric with respect to the signals of bidders with the same type of expertise, but arbitrarily non-symmetric across bidders with different types of expertise. Note that this captures arbitrary monotone valuation functions when $\ell=n$, and it also captures several classes of instances where the valuations of different bidders are not (even approximately) single-crossing or SOS.
We call a bidder \emph{optimal for some signal vector $\sVec$} if $i$ is a highest value bidder for that signal profile, i.e., $i\in\arg\max_{j\in N}\{v_j(\sVec)\}$. We use $\opta{\sVec}$ to refer to an optimal bidder for signal vector $\sVec$, breaking ties arbitrarily but consistently if there are multiple optimal bidders for $\sVec$.

In interdependent value settings, a direct-revelation mechanism receives the bidders' signals as input and outputs a bidder to serve and a vector of prices $\pVec(\sVec)$ which each bidder is charged.  For any bidder $i$, the utility $u_i(\sVec) = \val{i}{\sVec} - p_i(\sVec)$ if $i$ is served and $u_i(\sVec) = - p_i(\sVec)$, otherwise.  A mechanism is \emph{ex-post individually rational} if $u_i(\sVec) \geq 0$ for all $i$, assuming all bidders report their true signals. 
A mechanism is \emph{ex-post incentive compatible} if the utility that bidder $i$ receives by reporting her true signal is at least as high as the utility she would obtain by reporting any other signal, assuming all the other bidders report their true signals, i.e., $u_i(s_i, \sVec_{-i}) \geq u_i(s'_i, \sVec_{-i})$ for all $i, \sVec_{-i}$. 
If a mechanism uses randomization, we say that it is \emph{universally ex-post individually rational and ex-post incentive compatible} (universally IC-IR) if it is a distribution over deterministic ex-post individually rational and ex-post incentive compatible mechanisms.

We look to design universally IC-IR randomized mechanisms that aim to serve the bidder with highest realized value given the signal profile.  We measure the expected performance of these mechanisms against the optimal solution given full information.  Given some instance $I$, let $\mechanism(I)$ denote the bidder served by auction $\mechanism$. We then say that $\mechanism$ achieves an $\alpha$-approximation to the optimal welfare for a family of instances $\mathcal{I}$ if
\[\sup_{I \in \mathcal{I}}\frac{\text{max}_{i \in \agents}\{\val{i}{\sVec}\}}{\mathbb{E}\left[\val{\mechanism(I)}{\sVec}\right]} \leq \alpha\]
where the expectation is taken over the random coin flips of our mechanism. In terms of revenue, note that for mechanisms that are individually rational (like the ones that we propose in this paper), we know that the 
revenue of these mechanisms is always a lower bound for their social welfare. We therefore use the optimal social welfare as an upper bound for the optimal revenue and say that $\mechanism$ achieves an $\alpha$-approximation of revenue for a family of instances $\mathcal{I}$ if
\[\sup_{I \in \mathcal{I}}\frac{\text{max}_{i \in \agents}\{\val{i}{\sVec}\}}{\mathbb{E}\left[p_{\mechanism(I)}(\sVec)\right]} \leq \alpha.\]

Our main results in this paper take the form of \emph{clock auctions over signals}.  A clock auction over signals is a multi-round dynamic mechanism in which bidders are faced with personalized ascending signal clocks.  Throughout the auction, the clocks are non-decreasing and, at any point in the auction, a bidder may choose to permanently exit the auction (thereby losing the good permanently).  When a bidder is declared the winner, she is offered a price (greater than or) equal to the value implied by the final clock signals for all bidders.  In a clock auction, a bidder exits the auction if and only if her signal clock is greater than her true signal, we refer to this as \emph{\clockStrategy}. In particular, we seek to design clock auctions where \clockStrategy \ is an \emph{obvious ex-post equilibrium} (OXP) strategy profile \cite{li2017oxp}.  A strategy profile is an OXP of an auction if for any bidder $i$, holding all other bidders' strategies fixed (and assuming they are acting truthfully), the best utility $i$ can obtain by deviating from her truthful strategy under any possible type profile of the other bidders consistent with the history (i.e., their clock signals) is worse than the worst utility $i$ can obtain by following her truthful strategy under any possible  type profile of the other bidders consistent with the history.

\section{Instances with Binary Signal Values}\label{sec:binary}
In this section, we consider the natural case where the signal of each bidder regarding the good can take one of two possible values, e.g., ``low quality'' and ``high quality''. We first focus on instances where the bidders' valuation functions are symmetric over the signals, and we provide a clock auction which admits an ex-post obvious equilibrium and $5$-approximation to the optimal social welfare. We then extend this result to general valuation functions, achieving a $5\nExperts$-approximation to the optimal social welfare. This auction is then also used as a building block for the results of the next section, which considers a setting with $k>2$ signal values.

\subsection{A Clock Auction for Symmetric Valuation Functions}

A central result of this paper is the \emph{signal discovery auction}, which is presented as a sealed-bid auction below (see Mechanism \ref{alg:sampling}), but can also be implemented as a clock auction (see Theorem~\ref{lem:clockAuctionBinary}). This auction aims to discover how many bidders have a high signal, while minimizing the probability that the optimal bidder is rejected during the discovery process. Throughout the execution of the auction, the set $A$ includes the bidders that remain active, i.e., the ones that have not been rejected yet. The variables $q_{min}$ and $q_{max}$ provide a lower and an upper bound, respectively, for the number of bidders that have a high signal, based on the signals discovered up to that point. Note that $q_{min}$ is initialized to $0$ and $q_{max}$ is initialized to $n$, corresponding to all bidders having signal $0$ or signal $1$, respectively. The set $R^*$ contains all the bidders that have been rejected, without first verifying that they are not optimal. 

The auction uses randomized sampling in order to initiate this discovery process: it chooses one of the active bidders uniformly at random, it rejects that bidder, and then uses its signal value to narrow down the range $[q_{min}, q_{max}]$. We refer to this as a ``costly'' signal discovery, because it may lead to the rejection of the highest value bidder. Then, this discovery leads to a sequence of ``free'' signal discoveries, by using this information to identify active bidders that cannot be optimal, rejecting them, and then using their signal to further narrow down the $[q_{min}, q_{max}]$ range.
When no additional free signal discoveries are available, the auction removes any bidder of $R^*$ that is now verified to be non-optimal, and executes another costly signal discovery.

This process continues until there is only one active bidder, at which point this bidder is declared the winner. We say that a signal profile $\sVec$ is consistent with some $q \in [q_{min}, q_{max}]$ if it contains a number of ``high'' signals equal to $q$.  If this bidder $i$ is optimal for a signal profile $\sVec$ consistent with exactly one $q\in [q_{min}, q_{max}]$, then the bidder is awarded the good at price $p=v_i(\sVec)$; if the bidder is optimal for multiple signal profiles consistent with distinct numbers of ``high'' signal bidders in $[q_{min}, q_{max}]$, she is awarded the good at the price corresponding to a signal profile with the fewest number of ``high'' signal bidders. 

\begin{algorithm}[h]
\label{alg:sampling}
\SetAlgoLined
\DontPrintSemicolon
\SetAlgoNoEnd
 Let $A\leftarrow \agents$, $R^*\leftarrow \emptyset$, $q_{min} \leftarrow 0$, and $q_{max} \leftarrow n$\;
 \While{$|A| > 1$}{
    \tcp{A ``costly'' signal discovery}
    Select a bidder $i\in A$ uniformly at random\;
    Let $A\leftarrow A \setminus \{i\}$ and $R^*\leftarrow R^* \cup \{i\}$\;
    \eIf{$s_i = \lowsig$}{
        $q_{max} \leftarrow q_{max}-1$\;}
        { 
        $q_{min} \leftarrow q_{min}+1$\;}
    \tcp{A sequence of ``free'' signal discoveries}
    \While{$\exists j\in A$ that is not optimal for any $\sVec$ consistent with some $q\in [q_{min}, q_{max}]$}
    {$A \leftarrow A\setminus\{j\}$\;
    \eIf{$s_j = \lowsig$}{
        $q_{max} \leftarrow q_{max}-1$\;}
        { 
        $q_{min} \leftarrow q_{min}+1$\;}}
    \While{$\exists j\in R^*$ that is not optimal for any $\sVec$ consistent with some $q\in [q_{min}, q_{max}]$}
    {$R^* \leftarrow R^*\setminus\{j\}$\;}
}
 Let $i$ be the single bidder in $A$\;
 Let $s'_i$ be the smallest signal such that $i$ is optimal for $(s'_i, \sVec_{-i})$\; \label{lst:line:Price}
 \If{$\val{i}{\sVec} \geq \val{i}{(s'_i, \sVec_{-i})}$} 
 {Award the good to $i$ at price $\val{i}{(s'_i, \sVec_{-i})}$\; \label{lst:line:Alloc}} 
\caption{Signal discovery auction for binary signal values}
\end{algorithm}

The following lemma shows that the size of $R^*$ is never more than 2, which allows us to bound the probability that the auction identifies the optimal bidder.

\begin{lemma}\label{lem:Rbound}
Throughout the execution of the signal discovery auction, the size of $R^*$ is never more than 2. 
\end{lemma}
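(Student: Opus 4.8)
The plan is to reduce the claim to a statement about the state of the auction right after each execution of the final \textbf{while} loop — the one that prunes $R^*$; call such a point a \emph{post-cleanup moment}. I would show that at every post-cleanup moment $|R^*|\le 1$. This suffices, because $R^*$ grows only through the single costly signal discovery at the top of each outer iteration, which adds exactly one bidder (since $A\cap R^*=\emptyset$), the free-discovery loop leaves $R^*$ untouched, and then the cleanup loop runs again; hence between consecutive post-cleanup moments $|R^*|$ increases by at most one, and $|R^*|=0$ initially, so $|R^*|\le 2$ at all times (with the exit of the outer loop also occurring at a post-cleanup moment).

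First I would record two bookkeeping invariants. (i) Once a bidder leaves $A$ it never re-enters, once it leaves $R^*$ it never re-enters, and $A\cap R^*=\emptyset$ always; in particular $|A\cup R^*|=|A|+|R^*|$. (ii) $q_{max}-q_{min}=|A|$ throughout: it holds initially, every removal from $A$ (costly or free) decreases both sides by one (either $q_{min}$ increases or $q_{max}$ decreases according to the discovered signal), and cleanup steps change neither side. I would also note that $q_{min}$ is non-decreasing and $q_{max}$ non-increasing over the whole run, so every interval $[q_{min},q_{max}]$ appearing later is contained in every one appearing earlier.

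The key step is the following identity, which I claim holds at every post-cleanup moment: writing $h(q)$ for the (consistently tie-broken) optimal bidder when exactly $q$ bidders hold a high signal,
\[ A\cup R^* \;=\; \{\, h(q)\ :\ q\in[q_{min},q_{max}]\,\}. \]
The inclusion ``$\subseteq$'' is immediate from the exit conditions of the two inner loops: the free-discovery loop stops only once every $j\in A$ is optimal for some profile consistent with some $q\in[q_{min},q_{max}]$, i.e.\ $j=h(q)$ for such a $q$, and the cleanup loop enforces the same for every $j\in R^*$. For ``$\supseteq$'', suppose $q\in[q_{min},q_{max}]$ but $h(q)$ has already been discarded (removed from $A$ by a free discovery, or removed from $R^*$). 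At the moment of that discard the active interval was some $[q_{min}',q_{max}']\supseteq[q_{min},q_{max}]$ (since $q_{min}$ only increases and $q_{max}$ only decreases over time), hence $q\in[q_{min}',q_{max}']$; but a discard requires the bidder to be optimal for \emph{no} profile consistent with any count in $[q_{min}',q_{max}']$, contradicting that $h(q)$ is optimal for count $q$. So $h(q)\in A\cup R^*$. Combined with invariant (ii), this yields $|A|+|R^*|=|A\cup R^*|\le q_{max}-q_{min}+1=|A|+1$, hence $|R^*|\le 1$ at every post-cleanup moment.

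The hard part is handling ties: the final counting step needs a \emph{single} optimal bidder $h(q)$ per count $q$, so that the right-hand set above has at most $q_{max}-q_{min}+1$ members; if one used the full set of maximizers instead, a single count with many tied optimal bidders could leave several of them sitting in $R^*$ simultaneously, so the argument genuinely leans on the consistent tie-breaking rule fixed in the preliminaries. A secondary point to watch is that the identity is asserted only at post-cleanup moments — immediately after a costly discovery a freshly rejected bidder may fail to be optimal for the now-smaller interval, which is precisely why $|R^*|$ is permitted to momentarily reach $2$ before the next cleanup.
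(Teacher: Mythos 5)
Your proof is correct, and it takes a genuinely different (and arguably cleaner) route than the paper's. The paper's argument is process-oriented: it enumerates four cases for what can happen to the optimal bidder $\opta{\sVec}$ each time a candidate profile is ruled out, observes that only two of those cases can enlarge $R^*$, and then uses a pigeonhole count at the end of each outer iteration ($|A|+1$ consistent profiles vs.\ $|A|+|R^*|$ surviving bidders) to argue inductively that those two cases can fire at most once per iteration. Your proof extracts the essential content of that pigeonhole step and makes it the whole argument: you isolate the ``post-cleanup'' moments, prove the set identity $A\cup R^*=\{\opta{q}: q\in[q_{min},q_{max}]\}$ there, and combine it with the invariant $q_{max}-q_{min}=|A|$ to get $|A|+|R^*|\le|A|+1$ directly, with the case analysis replaced by the much lighter observation that $|R^*|$ can grow by at most one between consecutive post-cleanup moments. (For the final count you only actually need the ``$\subseteq$'' half of your identity, so the ``$\supseteq$'' direction is extra.) One point worth noting: the ``$\subseteq$'' step slides from ``$j$ is optimal for some consistent $\sVec$'' (the literal loop condition, using the argmax notion of optimal) to ``$j=\opta{q}$ for some consistent $q$'' — these coincide only under the fixed tie-breaking. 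You flag this explicitly, and the paper's proof relies on the same reading (its case analysis is written entirely in terms of $\opta{\sVec}$), so this is a shared convention rather than a gap in your argument; still, stating up front that the loop-removal conditions are taken to use $\opta{\cdot}$ would make the ``immediate'' claim literally immediate.
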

\begin{proof}

We first note that, throughout the auction, the only bidders in $A \cup R^*$ are the potentially optimal bidders (i.e., those which correspond to some possible signal profile) since bidders are removed from $A \cup R^*$ when they are determined to be non-optimal.
Initially $R^*$ is empty and at the beginning of each iteration of the outer while-loop, one randomly sampled active bidder $i$ is added to this set, increasing its size by one. The signal of bidder $i$ is then used to update either $q_{min}$ or $q_{max}$; if $s_i=0$ the auction can infer that $q_{max}$ is not the true number of high signal bidders, and if $s_i=1$ the auction can infer that $q_{min}$ is not the true number of high signal bidders. In both of these cases, some possible symmetric signal profile is ruled out, and this may lead to a sequence of ``free'' signal discoveries, as discussed below.

Whenever a symmetric signal profile $\sVec$ is ruled out, there are four possibilities regarding the bidder who is optimal for that level, i.e., the bidder $\opta{\sVec}$:
\begin{enumerate}
    \item If $\opta{\sVec}$ is in $A$ and is not optimal for any other $\sVec'$ consistent with some number $q$ of high signal bidders in the updated interval $[q_{min}, q_{max}]$, then the first inner-while loop of the auction will remove that bidder from $A$ and use its signal to rule out one more quality level.
    \item If $\opta{\sVec}$ is in $A$ and is also optimal for some other $\sVec'$ consistent with some number $q$ of high signal bidders in the updated interval $[q_{min}, q_{max}]$, then the iteration of the outer while-loop terminates without any additional operations and we proceed to the next iteration.
    \item If $\opta{\sVec}$ is in $R^*$, and is not optimal for any other $\sVec'$ consistent with some number $q$ of high signal bidders in the updated interval $[q_{min}, q_{max}]$, then the second inner while-loop removes $\opta{\sVec}$ from $R^*$ and we proceed to the next iteration of the outer while-loop.
    \item If $\opta{\sVec}$ is in $R^*$, and is also optimal for some other $\sVec'$ consistent with some number $q$ of high signal bidders in the updated interval $[q_{min}, q_{max}]$, then the iteration of the outer while-loop terminates without any additional operations and we proceed to the next iteration.
\end{enumerate}

Considering these four possibilities, note that while the first case arises, the execution remains in the first inner while-loop and the size of $R^*$ remains unchanged. When the third case arises, the size of $R^*$ is first reduced by one (because the auction enters the second inner while-loop) and then proceeds to the next iteration of the outer while-loop, which may bring this up to the same size again. As a result, the third case does not increase the size of $R^*$ either. 

On the other hand, both cases 2 and 4 may lead to an increase in the size of $R^*$ by 1, since they terminate the current iteration of the outer while-loop and may proceed to the next one, which would add one more bidder to $R^*$. 

However, at the end of each iteration of the outer while-loop, $A$ and $R^*$ contain only bidders that are optimal for some $\sVec$ consistent with some number of high signal bidders $q$ in $[q_{min}, q_{max}]$ (all the others are removed from $A$ in the first inner while-loop and from $R^*$ in the second inner while-loop). Also, at the end of each iteration of the outer while-loop, we have $q_{max}=q_{min}+|A|$. To verify this fact note that the signal of everyone not in $A$ has already been used to update the interval $[q_{min}, q_{max}]$ and the only signals not used yet are those of the bidders in $A$. If all the bidders in $A$ have a low signal, then the true $\sVec$ has $q_{min}$ bidders with high signals. If they all have a high signal (adding $|A|$ bidders with high signal), the true $\sVec$ has $q_{max}$ bidders with high signals.

Therefore, we know that at the end of each iteration of the outer while-loop, every bidder in $A$ and $R^*$ is optimal for some possible symmetric signal profile with a number of high value bidders in $[q_{min}, q_{max}]$ and there are at most $|A|+1$ such distinct signal profiles. If $R^*$ is empty at that point, this means that there can be at most one bidder in $A$ that is optimal for two distinct signal profiles. If $|R^*|=1$, then there are $|A|+1$ optimal bidders and $|A|+1$ distinct signal profiles, so there is no bidder in $A$ or $R^*$ that is optimal for more than one such profile. This means that in the next iteration of the outer while-loop, cases 2 and 4 listed above cannot arise, and therefore the size of $R^*$ cannot be strictly more than 1 at the end of any iteration of the outer while loop.
\qed
\end{proof}

\begin{theorem}\label{thm:baseTheorem}
The signal discovery auction achieves a $5$-approximation of the optimal welfare for instances with binary signals.
\end{theorem}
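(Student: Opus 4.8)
The plan is to reduce the theorem to a lower bound on the probability that the optimal bidder is served, and then to establish that bound using Lemma~\ref{lem:Rbound}. Write $q^{*}$ for the true number of ``high'' signals in $\sVec$, $i^{*}=\opta{\sVec}$ for the designated optimal bidder, and $OPT=\val{i^{*}}{\sVec}=\max_{j}\val{j}{\sVec}$. First I would record the invariant that $q^{*}\in[q_{min},q_{max}]$ throughout the run: the equality $q_{max}=q_{min}+|A|$ is maintained by every update, and each decrement of $q_{max}$ (resp.\ increment of $q_{min}$) follows the observation of a ``low'' (resp.\ ``high'') signal from a previously unexamined bidder, which legitimately discards that value. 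Consequently $i^{*}$, being optimal for the true profile, is optimal for a profile consistent with some $q$ in the current interval at every step, so it is never deleted in a free discovery and leaves $A$ only if it is the bidder sampled in a costly discovery. Moreover, if $i^{*}$ is the sole survivor, then the smallest signal $s'_{i^{*}}$ making $i^{*}$ optimal against $\sVec_{-i^{*}}$ is at most $s_{i^{*}}$, so by monotonicity the offered price $\val{i^{*}}{(s'_{i^{*}},\sVec_{-i^{*}})}\le\val{i^{*}}{\sVec}=OPT$, the good is awarded, and the realized welfare equals $OPT$. Since the mechanism is individually rational, the realized welfare is nonnegative in every other outcome, so it suffices to prove $\Pr[\,i^{*}\text{ survives every costly discovery}\,]\ge 1/5$.

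Next I would set up the combinatorics. Let $a_{1}>a_{2}>\dots>a_{T}$ be the sizes of $A$ at the successive costly discoveries; these strictly decrease because every outer iteration removes at least one bidder from $A$. Since a costly discovery picks each active bidder with probability $1/|A|$ and $i^{*}$ stays active until picked, the probability that $i^{*}$ survives equals $\prod_{t=1}^{T}(1-1/a_{t})$ conditioned on realizing a given schedule $(a_{t})$, and I want this to be at least $1/5$. For orientation it helps to note that at the end of each outer iteration $A\cup R^{*}$ is exactly the set of designated-optimal bidders for the $q$-values currently in $[q_{min},q_{max}]$, so as the interval shrinks the active set collapses to a few bidders and the final winner is one of at most three specific bidders: $i^{*}$ and the designated optima for $q^{*}-1$ and $q^{*}+1$.

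The crux is to bound $\prod_{t}(1-1/a_{t})$ well away from the trivial estimate $1/(T+1)\ge 1/n$, and this is where Lemma~\ref{lem:Rbound} does the work. The key phenomenon is that the schedule cannot decrease by exactly one for many consecutive rounds: in an outer iteration with no free discovery, the $q$-value that leaves $[q_{min},q_{max}]$ must belong to a bidder already sitting in $R^{*}$ at that moment (otherwise that bidder would be freely removed), and the bound $|R^{*}|\le 2$ caps how many such ``coverings'' can be arranged before the process is driven into a burst of free discoveries that collapses $|A|$ in a single iteration. Making this quantitative --- together with a short case analysis on the local structure of the valuations around $q^{*}$ and on how the earliest random samples clear out the bidders whose optimal $q$-ranges span many values (the ones that trigger such bursts) --- I would conclude $\prod_{t}(1-1/a_{t})\ge 1/5$, hence $\mathbb{E}\!\left[\val{\mechanism(I)}{\sVec}\right]\ge OPT/5$.

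The main obstacle is exactly this last step: naively the survival probability of $i^{*}$ can be as small as $1/n$, so recovering the constant $5$ demands a careful joint accounting of the random sampling, the interval shrinking, the free-discovery bursts, and the constraint $|R^{*}|\le 2$; since the factor $5$ is tight, this bookkeeping cannot be avoided.
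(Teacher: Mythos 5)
Your reduction to bounding $\Pr[\,i^{*}\text{ survives}\,]\ge 1/5$ is exactly the right target, and your invariants (that $q^{*}$ stays in $[q_{min},q_{max}]$, that $q_{max}=q_{min}+|A|$, and that $i^{*}$ is never removed by a free discovery) are all correct and used by the paper. But the proof stops short of the target: the display $\prod_{t}(1-1/a_{t})\ge 1/5$ is stated and motivated, not proven. You say as much --- ``Making this quantitative \dots I would conclude \dots'' and ``The main obstacle is exactly this last step'' --- so as it stands there is a genuine gap at precisely the point that carries the constant.

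Two further issues with the route you chose. First, the factorization $\prod_{t}(1-1/a_{t})$ ``conditioned on realizing a given schedule $(a_{t})$'' is not well posed: the schedule of active-set sizes is itself a function of which bidders get sampled, including whether $i^{*}$ does, so the product is not a clean conditional probability. Second, your closing claim that ``this bookkeeping cannot be avoided'' is false --- the paper avoids it entirely. The paper's proof realizes the costly samples as a single uniform random permutation $\pi$ of all bidders (at each costly discovery, pick the active bidder earliest in $\pi$) and then argues: using Lemma~\ref{lem:Rbound} (which in fact gives $|R^{*}|\le 1$ at the end of every outer iteration) together with $q_{max}=q_{min}+|A|$, if $i^{*}$ were ever sampled then, after removing the four bidders $h(q^{*}-2),h(q^{*}-1),h(q^{*}+1),h(q^{*}+2)$ from $A$, one of $|A|=0$ or $|A|=1$ is forced --- a contradiction, since a costly discovery requires $|A|\ge 2$. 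Hence $i^{*}$ wins whenever it is last among those five bidders in $\pi$, which happens with probability exactly $1/5$. This is why the relevant set has five elements, not the three you identify (``$i^{*}$ and the designated optima for $q^{*}\pm 1$''); the extra two at distance $2$ are needed because one of $h(q^{*}\pm1)$ can be parked in $R^{*}$ rather than eliminated. If you replace the product bound with this permutation/coupling argument, the rest of your write-up goes through.
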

\begin{proof}
Let $i^*$ be the optimal bidder and $q^*$ be the true number of high signals.
We first observe that a bidder is removed from $A \cup R^*$ only if they are determined to be non-optimal.  Thus, we know that $i^* \in A \cup R^*$ throughout the running of the algorithm.  By Lemma \ref{lem:Rbound} we know that $|R^*| \leq 2$ throughout the running of the algorithm. There are then at most $5$ distinct bidders who can be in $A \cup R^*$ at the end of the algorithm: $i^*$ and the (up to) four other bidders optimal for signal profiles corresponding to $q^*-2$, $q^*-1$, $q^*+1$, or $q^*+2$ high signal bidders.  Provided that these four other bidders enter $R^*$ (or are eliminated) before $i^*$ is added to $R^*$ we then obtain the optimal welfare.   We conclude by noting that, since the choices of the bidder to be added to $R^*$ is made uniformly at random, we can envision the order in which bidders are added to $R^*$ as a uniform at random permutation over the bidders fixed at the outset.  In a uniform random permutation, $i^*$ follows these four bidders with probability $1/5$. \qed

\end{proof}

The signal discovery auction, as presented, achieves no interesting worst-case approximation for revenue when the benchmark is the ex-post optimal welfare. In particular, if there is a single optimal bidder for all the signal profiles corresponding to numbers of high signal bidders in $[q_{min}, q_{max}]$, and the true number of high signal bidders is $q_{max}$, the mechanism charges the winner $i$ a price of $v_i(\sVec')$ where $\sVec'$ is the signal profile obtained by her signal being $0$ (corresponding to $q_{min}$ high signal bidders). If the true signal profile $\sVec''$ corresponds to having $q_{max}$ bidders of high signal, the ex-post optimal welfare is $v_i(\sVec'')$, which can be arbitrarily higher than $v_i(\sVec')$. To address this issue, our next result shows that if we slightly modify the pricing rule of the mechanism, then we can achieve revenue which is a $10$-approximation of the ex-post optimal welfare (which simultaneously also implies that the welfare we obtain is a $10$-approximation).

\begin{theorem}\label{thm:revTheorem}
The pricing rule of the signal discovery auction can be adjusted to achieve revenue which is a $10$-approximation of the optimal welfare for instances with binary signals.
\end{theorem}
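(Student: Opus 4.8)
The plan is to leave the entire signal-discovery loop of Mechanism~\ref{alg:sampling} untouched and only randomize the final offer, so that the winner is sometimes charged the value of the \emph{larger} of the two remaining candidate signal profiles instead of always the smaller one. When the auction terminates we have $|A|=1$, and the invariant $q_{max}=q_{min}+|A|$ (established in the proof of Lemma~\ref{lem:Rbound}) leaves exactly two signal profiles consistent with $[q_{min},q_{max}]$: the profile $\sVec^{0}$ with $q_{min}$ high signals and the profile $\sVec^{1}$ with $q_{max}=q_{min}+1$ high signals; these coincide on every coordinate except that of the winner $i$, whose signal was never discovered. Let $\sVec^{\mathrm{lo}}$ (resp.\ $\sVec^{\mathrm{hi}}$) be the profile among $\{\sVec^{0},\sVec^{1}\}$ with the fewest (resp.\ most) high signals for which $i$ is optimal --- at least one exists by the invariant, and the two coincide precisely when $i$ is optimal for only one of them. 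The adjusted pricing rule tosses a fair coin: with probability $1/2$ it makes the original offer, of the good to $i$ at price $v_i(\sVec^{\mathrm{lo}})$, and with probability $1/2$ it offers the good to $i$ at the higher price $v_i(\sVec^{\mathrm{hi}})$; in both cases the good is handed over only if $v_i(\sVec)$ is at least the offered price.

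For the approximation guarantee I would condition on the event $E$ that the bidder left in $A$ is an optimal bidder $i^{*}$ for the true profile $\sVec$. Since the adjustment does not alter the discovery phase, $E$ has exactly the distribution analysed in Theorem~\ref{thm:baseTheorem}, so $\Pr[E]\ge 1/5$. Under $E$ the bidder $i^{*}$ is optimal for $\sVec\in\{\sVec^{0},\sVec^{1}\}$, so monotonicity of $v_{i^{*}}$ gives $v_{i^{*}}(\sVec^{\mathrm{lo}})\le v_{i^{*}}(\sVec)\le v_{i^{*}}(\sVec^{\mathrm{hi}})$; hence the good is always awarded in the low-price branch, and it is awarded in the high-price branch at least when $\sVec=\sVec^{\mathrm{hi}}$. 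Writing $\mathrm{OPT}=v_{i^{*}}(\sVec)$, $a=v_{i^{*}}(\sVec^{\mathrm{lo}})$, $b=v_{i^{*}}(\sVec^{\mathrm{hi}})$ and splitting into the cases $\sVec^{\mathrm{lo}}=\sVec^{\mathrm{hi}}$ (then $\sVec=\sVec^{\mathrm{lo}}$ and the price equals $\mathrm{OPT}$ in both branches), $\sVec=\sVec^{\mathrm{lo}}\neq\sVec^{\mathrm{hi}}$ (then $\mathrm{OPT}=a$ and the expected revenue is at least $\tfrac12 a+\tfrac12\cdot 0$), and $\sVec=\sVec^{\mathrm{hi}}\neq\sVec^{\mathrm{lo}}$ (then $\mathrm{OPT}=b$ and the expected revenue is $\tfrac12 a+\tfrac12 b\ge\tfrac12 b$), we obtain $\mathbb{E}[\mathrm{rev}\mid E]\ge\mathrm{OPT}/2$ in every case. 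Therefore $\mathbb{E}[\mathrm{rev}]\ge\Pr[E]\cdot\mathrm{OPT}/2\ge\mathrm{OPT}/10$, and since the mechanism is individually rational the same inequality also re-proves the $10$-approximation of welfare.

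The step I expect to demand the most care is verifying that the adjusted auction is still (universally) ex-post IC and IR. Individual rationality is clear because the good changes hands only when the realized value covers the price. For incentive compatibility, fix the internal randomness (the sampling permutation of the proof of Theorem~\ref{thm:baseTheorem} together with the pricing coin). The key point is that a bidder's own reported signal enters the computation only at the instant her signal is ``consumed'' --- when she is sampled, or when she is eliminated by a free discovery --- and from that instant on she is no longer in $A$; since the execution up to such an instant never uses her report, and the free-discovery tests depend only on the valuation functions and on the already-discovered signals, whether a bidder's signal is ever consumed, and hence whether she is the eventual winner, is independent of what she reports. A bidder who is not the winner therefore has utility $0$ no matter what she reports, while the winner faces a take-it-or-leave-it offer at a price determined solely by $\sVec_{-i}$ and the coin, so truthfully reporting $s_i$ --- equivalently, taking the good iff her value covers the price --- is a best response. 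Repeating this for every realization of the randomness yields universal IC-IR, and together with the revenue bound above this completes the proof.
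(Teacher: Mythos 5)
Your proof is correct and takes essentially the same approach as the paper: both randomize uniformly over the at-most-two candidate profiles (distinguished only by the winner's undiscovered signal) for which the winner is optimal, obtaining a $1/2$ success probability conditional on the welfare-optimal bidder surviving the discovery phase and then combining with the $1/5$ bound from Theorem~\ref{thm:baseTheorem}. Your case analysis and the argument that the adjustment preserves ex-post IC-IR are spelled out in more detail than in the paper's terse proof, but the underlying mechanism and bound are identical.
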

\begin{proof}
If in line \ref{lst:line:Price} of Mechanism \ref{alg:sampling} we instead select a $\sVec$ for which $i$ is optimal consistent with some random $q' \in [q_{min}, q_{max}]$ and $\sVec$ is the true signal profile, we extract all of the welfare as revenue.  Since $i$ is the only bidder with unknown signal value, there are at most two levels for which $i$ is optimal so we select the signal profile with probability $1/2$, yielding the $10$-approximation. Note that in line \ref{lst:line:Alloc} we only allocate the item if the price is below the true value of $i$, so we preserve ex-post IC-IR with this modification. \qed
\end{proof}

We conclude this section by verifying that the outcome of the signal discovery auction can be implemented as an obvious ex-post equilibrium~\cite{li2017oxp}.

\begin{theorem}\label{lem:clockAuctionBinary}
The signal discovery auction can be implemented as an ascending clock auction over the signals wherein {\clockStrategy} is an obvious ex-post equilibrium. 
\end{theorem}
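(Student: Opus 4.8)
The plan is to exhibit an explicit clock-auction implementation of Mechanism~\ref{alg:sampling} and then argue that \clockStrategy\ forms an obvious ex-post equilibrium. First I would describe the clock dynamics: at each point in time exactly one ``active'' bidder $i\in A$ has her clock raised from $\lowsig$ to $\highsig$ (all others are held fixed at their current clock value), so the clocks are personalized and non-decreasing. A bidder who has not yet exited and whose clock reaches $\highsig$ while her true signal is $\highsig$ simply stays in; a bidder whose clock reaches $\highsig$ while her true signal is $\lowsig$ exits at that moment. Thus raising bidder $i$'s clock reveals $s_i$ exactly as the ``costly'' random draw does in the sealed-bid version: if $i$ exits we learn $s_i=\lowsig$ and decrement $q_{max}$; if $i$ survives the clock tick we learn $s_i=\highsig$ and increment $q_{min}$. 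The order in which we raise clocks is chosen uniformly at random among currently active bidders, matching the sampling step, and after each revealed signal we perform the ``free'' discoveries combinatorially (a bidder who cannot be optimal for any $\sVec$ consistent with the current $[q_{min},q_{max}]$ is never given a raised clock; she is simply kept frozen and effectively eliminated). The auction ends when a single bidder remains active; her clock vector together with the remaining ambiguity in $[q_{min},q_{max}]$ determines the same price as in line~\ref{lst:line:Price}, and we offer it only if it does not exceed her value, exactly as in line~\ref{lst:line:Alloc}.

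Next I would verify that this clock auction produces exactly the same (randomized) allocation and payment as Mechanism~\ref{alg:sampling} when every bidder plays \clockStrategy, which is immediate since \clockStrategy\ makes each clock tick reveal the true signal, so the state $(A,R^*,q_{min},q_{max})$ evolves identically. It remains to argue the OXP property. Fix a bidder $i$ and assume everyone else follows \clockStrategy. I would condition on an arbitrary history (the observed clock profile and exit times so far) and compare: (a) the worst-case utility of $i$ over all type profiles of the others consistent with the history, assuming $i$ also follows \clockStrategy\ from now on; versus (b) the best-case utility of $i$ over all such type profiles under any deviation available to $i$ at this history. Following \clockStrategy, $i$ is either eventually eliminated/rejected, in which case she pays nothing and gets utility $0$, or she is the last active bidder and is offered a price $p\le v_i(\sVec)$ by the rule in line~\ref{lst:line:Alloc} (and gets $0$ if she declines, which happens exactly when $p$ would exceed her value); so her worst-case utility under \clockStrategy\ is always $\ge 0$. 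For the deviations: the only deviations available in a clock auction are to exit earlier than \clockStrategy\ dictates, or to stay in past the point where her clock has exceeded her true signal. Exiting early can only cause $i$ to forfeit the good, yielding utility $0$, which does not beat the truthful guarantee. Staying in with an overstated signal can win the good, but then the price charged is computed from a signal profile in which $i$'s reported signal is at least her clock value $>s_i$, and since the price in line~\ref{lst:line:Price}--\ref{lst:line:Alloc} is (weakly) monotone in $i$'s reported signal and already equals the smallest signal making $i$ optimal, winning under an overstatement means paying at least $v_i$ evaluated at a profile that weakly dominates the true one --- hence paying at least $v_i(\sVec)$ --- so the best-case utility of this deviation is $\le 0$. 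Therefore (b) $\le 0 \le$ (a) at every history, which is precisely the OXP condition.

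The main obstacle I anticipate is handling the ``free discovery'' and the adaptive, history-dependent nature of which clock is raised next: I must make sure that a bidder $i$ who contemplates deviating cannot, by exiting early or staying in, change \emph{which other bidders} get eliminated for free in a way that later lowers her own price below $v_i(\sVec)$. The key point to nail down is that the free-elimination logic and the final pricing rule depend only on the set of \emph{consistent} signal profiles, which in turn depends only on the revealed signals (hence on the reports, truthful for everyone but $i$), and that overstating $s_i$ only shrinks $[q_{min},q_{max}]$ toward higher quality, never producing a consistent profile with value to $i$ below $v_i(\sVec)$; understating likewise cannot help. Once this monotonicity of the price in $i$'s own report is established, the worst-case-vs-best-case comparison above goes through uniformly over the unknown types of the others, which is exactly what OXP requires; the rest is bookkeeping that parallels the proof of Lemma~\ref{lem:Rbound} and Theorem~\ref{thm:baseTheorem}, and the remaining clock-auction formalities (monotone clocks, permanent exit, price at least the final clock-implied value) are satisfied by construction.
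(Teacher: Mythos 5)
Your proposal is correct and follows essentially the same approach as the paper's proof: you explicitly describe how the sealed-bid sampling step corresponds to raising a bidder's personalized clock from $0$ to $1$ and how the final take-it-or-leave-it price check corresponds to raising the winner's clock to $s'_i$, and you then establish OXP by the same two-case argument (bidder with true signal $1$ can never get negative utility by staying in, while a bidder with true signal $0$ who stays in can only win at a price weakly above her value, so best-case deviation utility is $\le 0 \le$ worst-case truthful utility at every history). Your extra verification that a deviation cannot manipulate the free-discovery logic to the deviator's benefit is a valid and worthwhile elaboration, but it is not a departure from the paper's argument; the key reason it goes through, as you observe, is that once a bidder is placed in $R^*$ she cannot win regardless of downstream events, so the only consequential decision is the winner's final clock raise, which your monotone-price reasoning handles.
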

\begin{proof}
Rather than asking bidders to report their signals we may instead equip each bidder with a signal clock.  The clocks of all bidders begin at $0$ and when bidder $i$ would have her signal discovered by the above mechanism, we instead raise the clock of $i$ to $1$.  If $i$ rejects the new clock signal level (i.e., permanently exits the auction), she cannot win the item \emph{regardless of her beliefs} about the signals of the remaining bidders.  

If the true signal of $i$ is $1$, for any profile of signals of the remaining bidders (assuming these signals are true) the worst utility $i$ can obtain by accepting the increased clock signal level is $0$ (by losing the item or by winning the item and being charged exactly her welfare).  Thus, at any point in the auction, regardless of the history, when $i$ is approached to increase her clock signal level, the best utility $i$ can obtain by not accepting the increased clock signal level (thereby necessarily losing the good) is weakly less than the worst utility $i$ can obtain by accepting the increased clock signal level.  On the other hand, if the true signal of $i$ is $0$, for any profile of signals of the remaining bidders (assuming these signals are true) if she instead accepts the increased clock signal level she either will continue to lose the auction (thereby obtaining a utility of $0$) or win the auction at a quality level higher than the actual underlying quality of the good.  Since the threshold signal of $i$ would then be $1$, she would necessarily be charged a price weakly higher than her value for the good and she would obtain non-positive utility.  Thus, in either case, truthfully responding whether or not the clock signal level is above a bidder's signal is an obvious ex-post equilibrium. \qed

\end{proof}

\begin{corollary}
The version of the signal discovery auction which obtains revenue guarantees can also be implemented as an ascending clock auction over the signals wherein {\clockStrategy} is an obvious ex-post equilibrium.
\end{corollary}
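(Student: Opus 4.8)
The plan is to observe that the revenue variant differs from the mechanism of Theorem~\ref{lem:clockAuctionBinary} \emph{only} in the price offered to the final winner, that this modified price still never exceeds a consistently-bidding winner's true value, and hence that the entire obvious-dominance argument of Theorem~\ref{lem:clockAuctionBinary} can be reused essentially verbatim. Concretely, the clock implementation is identical: every bidder starts with a signal clock at $\lowsig$, and whenever the sealed-bid mechanism would discover bidder $i$'s signal (either via the costly random draw or via a free discovery) we instead raise $i$'s clock to $\highsig$; exiting the auction precisely when one's clock rises above one's true signal is exactly {\clockStrategy}. The only change is at termination: rather than charging the surviving bidder $i$ the value at the smallest profile for which she is optimal (line~\ref{lst:line:Price}), the rule of Theorem~\ref{thm:revTheorem} picks a random $q'$ among the (at most two) values in $[q_{min},q_{max}]$ for which $i$ is optimal, forms the corresponding profile $\sVec$, and charges $\val{i}{\sVec}$ --- still applying the guard of line~\ref{lst:line:Alloc}, so that $i$ receives the good only when this price is at most $\val{i}{\sVec}$ for the realized profile. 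In particular, in every realization of the internal randomness the price is deterministic and, by this guard together with the monotonicity of $v_i$, at most the winner's value whenever she bids consistently; thus the mechanism remains (universally) ex-post individually rational.

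Next I would redo the two-case OXP check for a bidder $i$ who is approached to raise her clock from $\lowsig$ to $\highsig$, holding the other bidders' ({\clockStrategy}) strategies fixed and quantifying over all type profiles of the others consistent with the current history. If $i$'s true signal is $\highsig$: rejecting the raise means she certainly loses the good, so the best utility she can obtain by deviating is $0$; accepting it leaves her either losing (utility $0$) or winning at a price which, by the line~\ref{lst:line:Alloc} guard, is at most $\val{i}{\sVec}$ for the realized profile, so her utility is nonnegative --- hence the worst utility from {\clockStrategy} is $\geq 0 \geq$ the best utility from deviating. If $i$'s true signal is $\lowsig$: deviating to accept makes her discovered signal $\highsig$, so her threshold signal in the final computation is $\highsig$ and every profile at which she could be charged has her own coordinate at $\highsig$, which is at least her true signal; by monotonicity the price she would pay is at least $v_i$ evaluated at the truthful profile, so if she wins her utility is non-positive and if she loses it is $0$, whereas {\clockStrategy} (exiting) guarantees utility exactly $0$. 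In both cases the obvious-dominance inequality holds, and since it holds in every realization of the internal randomness, it holds for the randomized mechanism in the sense of universal IC-IR used throughout.

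The only place the new pricing rule interacts with the argument is the $\highsig$-signal case, where one must confirm that under the randomized price a consistently-bidding winner is never overcharged; this is exactly what the retained conditional-allocation step (line~\ref{lst:line:Alloc}) guarantees, so I expect this to be a mild obstacle rather than a serious one. Everything else --- the clock dynamics, the order in which bidders are approached, and the $\lowsig$-signal case --- is inherited unchanged from the proof of Theorem~\ref{lem:clockAuctionBinary}, and the approximation guarantee itself is precisely Theorem~\ref{thm:revTheorem}. \qed
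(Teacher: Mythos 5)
Your argument is correct and takes essentially the same approach as the paper's, which is a one-line observation that the clock implementation is unchanged except for the final step. The one thing worth making explicit (and which the paper's own phrasing supplies) is that the clock auction cannot ``apply the guard of line~\ref{lst:line:Alloc}'' directly, since the winner is the last bidder standing and her signal is never observed; instead, the auction raises the winner's clock to the signal level corresponding to the randomly chosen $q'$, so that her acceptance or rejection of that level simultaneously implements the guard (via monotonicity of $v_i$) and fixes the take-it-or-leave-it price --- your $\highsig$-signal case implicitly assumes exactly this, so the issue is one of phrasing rather than substance.
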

\begin{proof}
The proof follows exactly as above except we raise the clock signal level of the winning bidder to the one corresponding to the randomly selected signal profile (effectively setting a take-it-or-leave-it price at this signal). \qed
\end{proof}

\subsection{A Clock Auction for General Valuation Functions}

In this section, we demonstrate how our auction for symmetric valuation functions, i.e., the case where $\ell=1$, above, can be easily extended to handle general valuation functions over binary signals, leading to approximation bounds that depend on the number of expert-groups, $\nExperts$.  

The mechanism first uniformly at random selects some $\nExperts'\in \{1, 2, \dots, \nExperts\}$, and then assumes that the optimal bidder belongs to expertise type $\nExperts'$.  The mechanism rejects all bidders outside expertise-group $\nExperts'$ and ``learns'' their signals. 
The auction then knows all the signals of bidders not in $\nExperts'$ and the problem reduces to also discovering the number of bidders in $\nExperts'$ that have a high signal. We can therefore run Mechanism \ref{alg:sampling} among the bidders in $\nExperts'$ to decide the winner among them, and the price offered to her.

\begin{theorem}
The above mechanism yields a $5\nExperts$-approximation of the optimal welfare for instances with binary signals.
\end{theorem}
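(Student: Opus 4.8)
The plan is to condition on the random draw of the expertise-group $\nExperts'$ and reduce to the symmetric case already established in Theorem~\ref{thm:baseTheorem}. Let $i^{*}=\opta{\sVec}$ be the globally optimal bidder and let $t$ be the expertise-group to which $i^{*}$ belongs. Since $\nExperts'$ is drawn uniformly from $\{1,\dots,\nExperts\}$ independently of all signals, the event $E=\{\nExperts'=t\}$ has probability exactly $1/\nExperts$. Because the served bidder always has nonnegative value, it suffices to lower bound $\mathbb{E}\bigl[\val{\mechanism(I)}{\sVec}\bigr]$ by $\Pr[E]\cdot \mathbb{E}\bigl[\val{\mechanism(I)}{\sVec}\mid E\bigr]$.

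Conditioned on $E$, the mechanism rejects every bidder outside group $t$, learns their signals $\sVec_{-t}$, and then runs Mechanism~\ref{alg:sampling} among the bidders of group $t$. The key observation is that, with $\sVec_{-t}$ fixed, each surviving valuation $v_i(\cdot,\sVec_{-t})$ is a weakly increasing function of the signals of the bidders in group $t$ that is symmetric in those signals, by the definition of expertise-groups. Hence the residual problem is exactly an instance of the binary, symmetric setting of Section~\ref{sec:binary}. Moreover, since $i^{*}$ is a highest-value bidder overall, it is in particular a highest-value bidder among the bidders of group $t$ for the true signal profile, so the optimal bidder of the residual instance coincides with $i^{*}$. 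Therefore Theorem~\ref{thm:baseTheorem} applies to the residual instance and guarantees that Mechanism~\ref{alg:sampling} serves $i^{*}$ with probability at least $1/5$, i.e. $\mathbb{E}\bigl[\val{\mechanism(I)}{\sVec}\mid E\bigr]\ge \tfrac{1}{5}\,\val{i^{*}}{\sVec}$.

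Combining the two steps gives $\mathbb{E}\bigl[\val{\mechanism(I)}{\sVec}\bigr]\ge \tfrac{1}{\nExperts}\cdot\tfrac{1}{5}\cdot \val{i^{*}}{\sVec}=\tfrac{1}{5\nExperts}\max_{i\in\agents}\val{i}{\sVec}$, which is the claimed $5\nExperts$-approximation. I would also remark that ex-post IC-IR (and the clock-auction / obvious ex-post equilibrium implementation of Theorem~\ref{lem:clockAuctionBinary}) is inherited from Mechanism~\ref{alg:sampling}: rejecting the out-of-group bidders and revealing their signals up front does not change the decision problem faced by the in-group bidders, and a uniformly random, signal-independent choice of $\nExperts'$ keeps the overall mechanism a distribution over deterministic IC-IR mechanisms.

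The main obstacle — essentially the only point requiring care — is the reduction step: one must verify that fixing the out-of-group signals genuinely produces a legitimate symmetric binary instance, so that Theorem~\ref{thm:baseTheorem} applies verbatim (including the well-definedness of the winner's price, which uses only the now-known out-of-group signals together with the final in-group clock levels), and that the notion of ``optimal bidder'' is preserved under this restriction. Everything else is a one-line probability computation over the independent choice of $\nExperts'$.
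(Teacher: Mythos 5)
Your proposal is correct and follows essentially the same approach as the paper: condition on the event that the randomly selected expertise group contains the globally optimal bidder (probability $1/\nExperts$), observe that after rejecting and learning the out-of-group signals the residual problem is a symmetric binary instance whose optimal bidder is still $i^*$, and invoke Theorem~\ref{thm:baseTheorem} to win with conditional probability at least $1/5$. You spell out the conditioning and the reduction step slightly more explicitly than the paper does, but the decomposition and the key lemma used are identical.
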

\begin{proof}
The probability that the optimal bidder does, indeed, belong to the expertise-group $\nExperts'$ is $1/\nExperts$. If the mechanism guesses the value of $\nExperts'$ correctly, then the rejection of all the other bidders comes at no cost, and it reduces the problem to finding the optimal bidder within the group $\nExperts'$. But, since we now know all the signal values of bidders outside the group $\nExperts'$, we can use Mechanism \ref{alg:sampling} to discover the optimal bidder with probability at least $1/5$ (by Theorem \ref{thm:baseTheorem}). Combining these observations, the above mechanism allocates to the optimal bidder with probability at least $1/(5\nExperts)$.
\qed

\end{proof}

\begin{theorem}\label{thm:revBinaryExperts}
The pricing rule of the above mechanism can be adjusted to achieve revenue which is a $10\nExperts$-approximation of the optimal welfare for instances with binary signals.
\end{theorem}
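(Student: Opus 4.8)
The plan is to compose the two orthogonal sources of randomness that already appear in the excerpt: the uniformly random guess $\nExperts'\in\{1,\dots,\nExperts\}$ of the optimal bidder's expertise group (exactly as in the $5\nExperts$-approximation mechanism) and the modified pricing rule from the proof of Theorem~\ref{thm:revTheorem}. First I would run the expert-group reduction verbatim: draw $\nExperts'$ uniformly at random, reject every bidder outside group $\nExperts'$ and record their now-known signals, and then invoke Mechanism~\ref{alg:sampling} restricted to the bidders in group $\nExperts'$. With probability $1/\nExperts$ the true optimal bidder $\opta{\sVec}$ lies in group $\nExperts'$, and conditioned on that event the rejection of all other groups is costless, so the instance has effectively collapsed to a within-group symmetric binary-signal instance on which Mechanism~\ref{alg:sampling} applies.

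Next, instead of the threshold pricing of line~\ref{lst:line:Price}, I would use the rule from Theorem~\ref{thm:revTheorem}: once the single surviving active bidder $i$ in group $\nExperts'$ is identified, pick a number of high signals $q'$ uniformly at random among the at most two values in $[q_{min},q_{max}]$ for which $i$ is optimal, let $\sVec'$ be the corresponding completed signal profile, and offer $i$ the good at price $\val{i}{\sVec'}$, allocating only if $\val{i}{\sVec}\ge\val{i}{\sVec'}$ (as in line~\ref{lst:line:Alloc}), which preserves ex-post IC-IR exactly as before. Since $i$'s own signal is the only unknown once we are inside group $\nExperts'$, there are at most two such profiles, so with probability at least $1/2$ the chosen $q'$ equals the true count $q^*$, in which case the price equals the full realized welfare $\val{i}{\sVec}$.

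Putting the three coin flips together — the group guess (probability $1/\nExperts$), the random permutation in which the at most four ``neighbouring'' optimal bidders precede $\opta{\sVec}$ in $R^*$ (probability $1/5$, via Lemma~\ref{lem:Rbound} and the permutation argument of Theorem~\ref{thm:baseTheorem}), and the pricing draw (probability $1/2$) — with probability at least $1/(10\nExperts)$ the mechanism allocates to $\opta{\sVec}$ at a price equal to the optimal welfare $\max_{i}\val{i}{\sVec}$, and otherwise it collects nonnegative revenue. Taking expectations over these independent draws and comparing to $\max_{i}\val{i}{\sVec}$ yields the claimed $10\nExperts$-approximation, and the clock-auction / obvious-ex-post-equilibrium implementation carries over through the corollary to Theorem~\ref{lem:clockAuctionBinary} by additionally raising the winner's clock to the level dictated by $q'$.

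The one point requiring care is justifying the multiplicative composition of the three randomization steps: the group index is drawn up front, the selection order inside Mechanism~\ref{alg:sampling} is an independent uniform permutation, and the pricing draw is made last and is independent of both; the only coupling is that the ``at most two profiles for which $i$ is optimal'' claim relies on being inside the correct group, so that every signal except $i$'s is known — but that is precisely the event we are already conditioning on. Beyond that bookkeeping, the argument is a direct combination of Theorem~\ref{thm:baseTheorem}, Theorem~\ref{thm:revTheorem}, and the $5\nExperts$ reduction, so no genuinely new obstacle arises.
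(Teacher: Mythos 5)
Your proposal is correct and matches the paper's argument: compose the $1/\nExperts$ group guess, the $1/5$ permutation argument from Theorem~\ref{thm:baseTheorem}, and the $1/2$ random-price draw from Theorem~\ref{thm:revTheorem} to land on the full optimal welfare with probability at least $1/(10\nExperts)$. The paper's own proof is more terse, deferring almost entirely to Theorem~\ref{thm:revTheorem}, whereas you helpfully spell out the independence of the three randomizations and why the "at most two candidate profiles" fact holds conditional on the correct group guess, but the route is the same.
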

\begin{proof}
Similar to above, the proof of this theorem closely follows the proof of Theorem \ref{thm:revTheorem}.  Given that we select the type of expertise with the welfare-optimal bidder we obtain revenue which is a $10$-approximation to the welfare by running Mechanism \ref{alg:sampling} but instead selecting a random quality $q' \in [q_{min}, q_{max}]$ for which $i$ is optimal at line \ref{lst:line:Alloc}. \qed
\end{proof}

\section{Shared Quality Functions over $k$ Signal Values}\label{sec:linear}
We now move beyond instances with binary signals and consider a class of valuation functions over $k\geq 2$ signal values. Each bidder $i$'s signal can take any value $s_i\in \{0, 1, \dots, k-1\}$ and the average of these signals determines the \emph{quality of the good} $q=\sum_{i\in N}s_i$ (note that the average of the signals can be directly inferred from the sum, so we use the sum for simplicity of notation).\footnote{Note that the actual $k$ signal values need not be $\{0,1,\dots, k-1\}$, but we need them to be equidistant for our results to hold.} This captures a variety of settings where each bidder has some estimate regarding the quality, but the true quality is best approximated by averaging over all the bidders' signals (e.g., see the \emph{wisdom of the crowds} phenomenon \cite{surowiecki2005wisdom}). Each bidder $i$'s value for the good is provided by some (arbitrary) weakly increasing function $v_i(q)$, which depends only on $q$, quantifying how much each bidder values quality.

Apart from these symmetric valuation functions, we also consider non-symmetric ones involving $\nExperts$ different classes of experts. The bidders are partitioned into sets $N_1, N_2, \dots, N_{\nExperts}$, depending on their expertise, and the quality estimate from each expert group $\nExperts'$ is their average signal, i.e., $q_{\nExperts'}=\sum_{i\in N_{\nExperts'}}s_i$. In this case, the quality of the good is captured by the \emph{shared quality vector} $\qVec=(q_1, q_2, \dots, q_{\nExperts})$, and each bidder's valuation is a function $v_i(\qVec)$. The only restriction on the valuation function is that it is weakly increasing with respect to the underlying signals, but it can otherwise arbitrarily depend on the quality vector. For instance, this allows us to model settings where the signals of each group of experts imply the quality of the good with respect to some dimension, and each bidder can then synthesize this information into a quite complicated valuation function, depending on the aspects that she cares about the most.
 
In this section, we first provide a lower bound for the approximability of the optimal social welfare by universally ex-post IC-IR auctions, parameterized by $k$ and $\nExperts$. We then provide a way to leverage the ideas from the previous section to achieve essentially matching upper bounds using clock auctions and ensuring incentive guarantees even better than ex-post IC-IR.

\subsection{Approximation Lower Bound for ex-post IC-IR auctions}
We first prove a lower bound for the welfare approximation that one can achieve for the class of instances of this section involving $\nExperts$ types of experts with $k$ signal values each. It is worth noting that the construction for this lower bound is based on a simple class of valuation functions that only depend on the weighted average of the bidder's signals (with each expert group having a different weight coefficient). Also, for the case $k=2$, i.e., the binary case considered in the previous section, this implies a lower bound of $\nExperts + 1$.

\begin{theorem}
No ex-post incentive compatible auction with $\nExperts$ types of experts and shared quality functions can achieve better than an $\nExperts (k-1) + 1$-approximation to the optimal welfare.
\end{theorem}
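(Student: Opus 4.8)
The plan is to construct a family of instances where the optimal welfare is large but any ex-post IC-IR mechanism, regardless of which bidder it chooses, can only extract a small fraction of it. The natural approach is to use a single bidder whose value is highly sensitive to a signal that the mechanism cannot pin down without incurring the risk of rejecting that very bidder. First I would set up the simplest useful construction: let each of the $\nExperts$ expert groups contain a single ``pivotal'' bidder, and let valuations depend only on a weighted average $\sum_{\nExperts'} w_{\nExperts'} q_{\nExperts'}$, where the weights $w_{\nExperts'}$ are chosen so that the possible realized values of this weighted sum, as the signals range over $\{0,1,\dots,k-1\}^{\nExperts}$, are all distinct and can be linearly ordered. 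There are $\nExperts(k-1)+1$ distinct ``total quality'' levels (from all-zero to all-$(k-1)$), so one constructs $\nExperts(k-1)+1$ candidate ``super-bidders,'' each designed to be the optimal bidder at exactly one of these levels, with value equal to (roughly) a fixed large constant $V$ at its own level and negligible value at all lower levels.

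The key step is then an exposed-lie / threshold argument analogous to the reasoning in Theorem~\ref{thm:revTheorem} and Theorem~\ref{lem:clockAuctionBinary}: for bidder $j$ to be served at some quality level, ex-post IC forces the price charged to $j$ to be independent of $j$'s own signal over the range of signals consistent with $j$ being the winner, and ex-post IR forces that price to be at most $j$'s value at the \emph{lowest} such quality level. Since each of our candidate bidders has negligible value at every level below its ``target'' level, any mechanism that serves bidder $j$ can extract essentially no revenue from $j$ unless the mechanism is certain that the realized quality is exactly $j$'s target level --- but the mechanism learns the signals only from rejected bidders, and with all $\nExperts(k-1)+1$ candidates symmetric from the mechanism's viewpoint, it cannot distinguish among the levels without having rejected all candidates but one. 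I would make this precise by arguing that on this family of instances, whatever (possibly randomized) bidder the mechanism picks, the probability it picks the realized-optimal one is at most $1/(\nExperts(k-1)+1)$, and on every other outcome the welfare obtained is negligible; averaging gives the bound. To handle the possibility that a clever mechanism serves a ``wrong'' bidder who nonetheless has non-trivial value, I would choose the valuation functions so that the value gap between consecutive levels is arbitrarily large (e.g. geometrically increasing), so that serving anyone other than the realized optimum costs an arbitrarily large factor; then the only way to beat the claimed ratio would be to identify the true optimum with probability exceeding $1/(\nExperts(k-1)+1)$, which the indistinguishability argument rules out.

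Concretely, the steps in order: (1) define the weights $w_{\nExperts'}$ and verify that the $\nExperts(k-1)+1$ achievable weighted-sum levels are totally ordered and in bijection with ``number of total signal units''; (2) define $\nExperts(k-1)+1$ valuation profiles $v^{(t)}$, one per level $t$, each admitting a distinct optimal bidder $b_t$ whose value is $V_t$ at level $t$ and near-zero below; (3) observe that from the mechanism's perspective, before it has rejected (hence learned the signal of) a bidder, the instances corresponding to different true levels are indistinguishable, so the bidders are exchangeable and the winner is essentially a uniformly random one among the surviving candidates; (4) invoke ex-post IC-IR to show that any served bidder other than $b_t$ (at true level $t$) yields welfare bounded by the value gap, and that even $b_t$ yields only its own value $V_t$; (5) combine via a union/averaging bound to conclude the approximation factor is at least $\nExperts(k-1)+1 - o(1)$, then let the gaps tend to infinity.

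The main obstacle I anticipate is step (3)--(4): making the indistinguishability argument airtight against \emph{adaptive} mechanisms that interleave querying (rejecting) bidders with their decisions, since such a mechanism could, in principle, reject a candidate, learn it is not optimal, and thereby gain information. The resolution is the same structural fact used in Theorem~\ref{thm:baseTheorem}: the only way to learn that the true level is $t$ is to have eliminated all candidates $b_{t'}$ with $t' \ne t$, and at the moment only $b_t$ remains the mechanism has necessarily queried (and hence potentially rejected) every other candidate --- but rejecting $b_t$ itself is what must be avoided, and since the mechanism cannot tell $b_t$ apart from the others until they are all gone, the probability that $b_t$ is the last survivor is exactly $1/(\nExperts(k-1)+1)$. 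I would also need to be slightly careful that individual rationality together with the near-zero lower values genuinely forces near-zero revenue (not just near-zero price at one level), which follows because the threshold price is the same across all levels at which $j$ wins, and IR at the lowest of these caps it.
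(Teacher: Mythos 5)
Your construction is in the right spirit (weighted-sum qualities, distinct candidate optimal bidders, geometric value gaps), but the argument you sketch for why no ex-post IC mechanism can beat the bound is the wrong kind of argument, and it has a genuine hole.

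The theorem applies to \emph{direct-revelation} ex-post IC mechanisms, which receive the entire signal vector as input; there is no information the mechanism has to ``learn'' by rejecting bidders. So the centerpiece of your plan -- an indistinguishability/adaptive-querying argument in which ``bidders are exchangeable and the winner is essentially a uniformly random one among the surviving candidates'' -- simply does not apply: the mechanism can always compute the realized optimum. The analogy you draw to Theorems~\ref{thm:revTheorem} and~\ref{lem:clockAuctionBinary} (pricing and clock-implementation) is misdirected; those are upper-bound/implementation results, not tools for constraining arbitrary IC allocation rules. Relatedly, your step (4) reasons about prices and IR-capped revenue, but this is a \emph{welfare} lower bound: if the mechanism happens to allocate to the realized optimum $b_t$, it collects $b_t$'s full value as welfare regardless of the price, so the threshold-price argument proves nothing here.

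The missing key step -- and what the paper actually uses -- is the \emph{monotonicity of the allocation rule}, which is a necessary condition for ex-post IC: if bidder $b_t$ is served with probability $p$ at the profile where her signal is reduced (giving quality level $S - d w$), she must be served with probability at least $p$ at the profile where her signal is raised back to $k-1$ (giving the top level $S$). Combined with feasibility (allocation probabilities at the fixed profile $\sVec$ sum to at most one) across all $\nExperts(k-1)+1$ distinguished levels, this immediately forces the claimed $\nExperts(k-1)+1$ bound. You need to articulate this monotonicity step explicitly; without it, your argument does not constrain the allocation probabilities of a direct-revelation IC mechanism at all. Finally, a small but telling symptom of the confusion: you describe ``$\nExperts(k-1)+1$ distinct total-quality levels (from all-zero to all-$(k-1)$),'' but with weights chosen to make weighted sums distinct there are $k^{\nExperts}$ achievable levels; the $\nExperts(k-1)+1$ relevant ones are specifically the top profile $S$ together with the $\nExperts(k-1)$ levels obtained by lowering a \emph{single} bidder's signal, which is exactly the structure the monotonicity argument exploits.
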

\begin{proof}
We consider a particularly simple setting, in which the quality of the good can be summarized as a weighted average of all the bidders' signals (with bidders from different expertise classes given different weights).  Note that this is readily captured by the model described above.  
It follows that when we reduce the signal of $i$ by $d>0$, the quality of the good changes by $d w_i$. Note that $d$ can be at most $k-1$ different values. We construct a valuation function as follows.  For each $j \in \{0,1,\dots,\ell-1\}$, we define the valuation function of bidder $i$ where $(k-1)\cdot j + 1 \leq i \leq (k-1) \cdot (j+1)$ as follows:
\begin{equation*}
    \val{i}{t} =
    \begin{cases}
        \Delta_i & \text{if }~ t \geq S - (i - (k-1)\cdot j) \cdot w_i, \\
        0 & otherwise.
    \end{cases}
\end{equation*}
Finally, for bidder $i' = (k-1)\cdot\ell + 1$ (who has signal $0$ in $\sVec$), $\val{i'}{t} = \Delta_{i'}$ when $t \geq S$ and $\val{i'}{t} = 0$ otherwise.  
We let $\Delta_1 = 1$ and $\forall i > 1$, $\Delta_{i} = H\Delta_{i-1}$ ($H$ is arbitrarily large).  In other words, at any of these qualities, we must allocate the good to the optimal bidder with probability $1/\alpha$ to obtain an $\alpha$-approximation to the optimal welfare in the worst case.  To obtain a $\nExperts (k-1) + 1 - \epsilon$ approximation for $\epsilon > 0$, it then must be that we allocate the good to the optimal bidder at all of these qualities with probability at least $1/(\nExperts (k-1) + 1 - \epsilon)$.  But then we have that for all $d \in \{1, 2, \dots, k-1\}$ and $w \in \{1, k,\dots, k^{\nExperts - 1}\}$ if we allocate the good to the optimal bidder $i$ when the quality is $S - dw$ with probability $p$, we must continue to allocate the good to $i$ with probability $p$ when the quality is $S$ in order to maintain universal ex-post incentive compatibility (by monotonicity of an allocation rule).  Finally, since there are $\nExperts (k-1) + 1$ qualities identified above, each of the distinct optimal bidders at these qualities must be allocated the good with probability at least $1/(\nExperts (k-1) + 1 - \epsilon)$ at quality $S$, a contradiction.
\qed
\end{proof}

\subsection{A Clock Auction for Instances with Shared Quality Functions}
We now provide a way to reduce this problem to the case of binary signals, while losing only a $k-1$ factor in our bounds. As a result, the induced upper bounds closely approximate the lower bound provided above. The majority of this section discusses how Mechanism~\ref{alg:ksampling} achieves this reduction for the case where $\nExperts$, and then briefly explain how to generalize our bounds for instances with $\nExperts>1$.

Similarly to Mechanism~\ref{alg:sampling} in the binary setting, whose goal is to discover the number of signals that are high, Mechanism~\ref{alg:ksampling} aims to discover the value of the sum of the signals. 
Throughout its execution, the auction maintains an interval $[q_{min}, q_{max}]$ such that the true sum $q$ is guaranteed to be in that interval.  It gradually refines this range by discovering bidder signals as in the binary setting. The main difference is that we now need to be more careful in order to ensure that the size of $R^*$ remains low. To achieve this, the auction chooses some $m \in \{0, 1, \dots, k-2\}$ uniformly at random and assumes that $q\mod (k-1)=m$. It thus randomly reduces the number of values of $q$ that it considers from $n(k-1)+1$ (since the sum can initially range from 0 to $n(k-1)$) to just $n+1$ (which is equal to the length of the $[q_{min}, q_{max}]$ interval in the case of binary signals). Importantly, the values of $q$ that are considered after this sampling are spaced apart by $k-1$, allowing us to upper bound the size of $R^*$.

\begin{algorithm}[h]
\label{alg:ksampling}
\SetAlgoLined
\DontPrintSemicolon
\SetAlgoNoEnd
 Let $A\leftarrow \agents$, $R^*\leftarrow \emptyset$, $q_{min} \leftarrow 0$, and $q_{max} \leftarrow n(k-1)$\;
 Choose some $m \in \{0, 1, \dots, k-2\}$ uniformly at random\;
 $S \leftarrow \{q\in [q_{min}, q_{max}]~|~ q\mod (k-1) = m \}$ \;
 \While{$|A| > 1$}{
    \tcp{A ``costly'' signal discovery}
    Select a bidder $i\in A$ uniformly at random\;
    Let $A\leftarrow A \setminus \{i\}$ and $R^*\leftarrow R^* \cup \{i\}$\;
    $q_{max} \leftarrow q_{max}-(k-1-s_i)$\\
      $q_{min} \leftarrow q_{min} +s_i$ \\
    \tcp{A sequence of ``free'' signal discoveries}
    \While{$\exists j\in A$ that is not optimal for any $q\in S\cap [q_{min}, q_{max}]$}
    {$A \leftarrow A\setminus\{j\}$\;
    $q_{max} \leftarrow q_{max} +s_j-k+1$\\
      $q_{min} \leftarrow q_{min} +s_j$\\
    }
    \While{$\exists j\in R^*$ that is not optimal for any $q\in S\cap  [q_{min}, q_{max}]$}
    {$R^* \leftarrow R^*\setminus\{j\}$\;}
}
 Let $i$ be the single bidder in $A$\;
 Choose the smallest quality level $q'\in S\cap [q_{min}, q_{max}]$ for which $i$ is optimal\; \label{lst:kprice}
 \If{$\val{i}{\quality{\sVec}} \geq \val{i}{q'}$} 
 {Award the good to $i$ at price $\val{i}{q'}$\;} \label{lst:kpriceIR}
\caption{Signal discovery auction for $k$ signal values}
\end{algorithm}

\begin{lemma}\label{lem:Rboundk}
The  set of $R^*$ in Mechanism~\ref{alg:ksampling} is never more than 2.

\end{lemma}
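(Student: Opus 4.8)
The plan is to mirror the structure of the proof of Lemma~\ref{lem:Rbound}, since Mechanism~\ref{alg:ksampling} is deliberately designed so that, after the random choice of $m$, the set $S$ of ``live'' quality levels behaves exactly like the interval $[q_{min},q_{max}]$ did in the binary case: consecutive live levels differ by $k-1$, and updating $q_{min}$ or $q_{max}$ after discovering a signal $s_j$ removes precisely one live level from each end as appropriate. So first I would record the analogue of the binary invariant: at the end of each iteration of the outer while-loop, every bidder in $A\cup R^*$ is optimal for some $q\in S\cap[q_{min},q_{max}]$, every such bidder's signal that has \emph{not} yet been used is exactly the set of signals of bidders in $A$, and $|S\cap[q_{min},q_{max}]| = |A|+1$. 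The last identity is the key bookkeeping fact: since the as-yet-undiscovered signals are exactly those of the $|A|$ bidders still in $A$, and each $s_i\in\{0,\dots,k-1\}$ contributes a shift of $s_i$ to $q_{min}$ and $-(k-1-s_i)$ to $q_{max}$, the reachable values of $q$ consistent with the guess $q\equiv m\pmod{k-1}$ range over exactly $|A|+1$ elements of $S$ (the extremes being ``all remaining signals $0$'' and ``all remaining signals $k-1$'').

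Next I would carry out the same four-case analysis on what happens when a live level $q\in S$ is ruled out by a costly discovery. The optimal bidder $h(q)$ for that level is either in $A$ or in $R^*$, and is either optimal for some \emph{other} surviving live level or not. As in the binary proof: if $h(q)\in A$ and is not optimal elsewhere, the first inner while-loop removes it and we stay inside that loop, so $|R^*|$ is unchanged; if $h(q)\in R^*$ and is not optimal elsewhere, the second inner while-loop removes it, so $|R^*|$ drops by one before possibly being increased again in the next outer iteration; and only the two ``also optimal for another live level'' cases can leave $|R^*|$ incremented by one at the end of an outer iteration. So the size can only grow through a bidder who is simultaneously optimal for two distinct surviving levels of $S$.

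Then I would close the argument by the same counting step: at the end of any outer iteration, $A\cup R^*$ consists only of bidders optimal for some surviving level, and there are $|A|+1$ such levels by the invariant. If $R^*=\emptyset$, at most one bidder in $A$ can be optimal for two distinct surviving levels. If $|R^*|=1$, then $|A\cup R^*|=|A|+1$ equals the number of surviving levels, so no bidder is optimal for more than one of them; hence in the next outer iteration the two ``dangerous'' cases cannot occur, and $|R^*|$ cannot exceed $1$ at the end of that iteration either. Since within an outer iteration $|R^*|$ rises by at most one (the single costly-discovery addition at the top of the loop), it is never more than $2$.

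The main obstacle I anticipate is not the case analysis — that transfers almost verbatim — but rigorously establishing the invariant $|S\cap[q_{min},q_{max}]| = |A|+1$ and, relatedly, that ``ruling out a level'' by a signal update removes exactly one element of $S$ from the appropriate end and never creates a gap. This requires being careful that the spacing of $S$ by $k-1$ is preserved under the updates $q_{max}\leftarrow q_{max}-(k-1-s_i)$, $q_{min}\leftarrow q_{min}+s_i$: one must check that after these updates the new $q_{min}$ and $q_{max}$ are still congruent to $m$ modulo $k-1$ (they are, since $q_{min}+s_i$ and $q_{max}-(k-1-s_i)$ shift $q_{min}$ up by $s_i$ and $q_{max}$ down by $k-1-s_i$, and the original endpoints' window already contained $m$), and hence $S\cap[q_{min},q_{max}]$ shrinks by exactly the number of signals just consumed. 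Once that discretization-to-$S$ correspondence is nailed down, the rest is the binary proof with ``$q$'' in place of ``number of high signals'' and ``$S\cap[q_{min},q_{max}]$'' in place of ``$[q_{min},q_{max}]$''.
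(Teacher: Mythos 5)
Your proof mirrors the paper's argument closely and the four-case analysis and pigeonhole step are essentially the same, but there is one concrete error in the ``bookkeeping fact'' you flag as the main obstacle. You claim the invariant holds with \emph{equality}, $|S\cap[q_{min},q_{max}]| = |A|+1$, and justify it by asserting that the updates preserve the congruence of $q_{min}$ and $q_{max}$ to $m$ modulo $k-1$. That is false: the update $q_{min}\leftarrow q_{min}+s_i$ shifts $q_{min}$ by $s_i\in\{0,\dots,k-1\}$, which changes its residue class unless $s_i\equiv 0\pmod{k-1}$. Indeed the invariant already fails at the very start if $m\neq 0$: initially $q_{min}=0$ and $q_{max}=n(k-1)$ are both $\equiv 0$, so $|S\cap[0,n(k-1)]|=n=|A|$, not $|A|+1$. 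What the algorithm does guarantee is only that $q_{max}-q_{min}=(k-1)\cdot|A|$ at the end of each outer iteration, and since $S$ is an arithmetic progression of common difference $k-1$, an interval of that width contains at most $|A|+1$ of its elements. The paper accordingly states the invariant as $|S\cap[q_{min},q_{max}]|\leq |A|+1$ (with equality only when both endpoints land in $S$).

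Fortunately the weaker bound is all the pigeonhole step needs: if $|R^*|=1$ at the end of an outer iteration, then $A\cup R^*$ has $|A|+1$ bidders each of whom is optimal for at least one surviving level, and ties are broken consistently so each level has a unique optimal bidder; with \emph{at most} $|A|+1$ levels this forces each bidder to be optimal for exactly one, and the next costly discovery cannot trigger cases 2 or 4. So your overall structure is sound and the fix is small — replace the equality by $\leq$ and drop the congruence claim — but as written the justification you propose for the invariant would not survive scrutiny.
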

\begin{proof}
We first note that, throughout the auction, the only bidders in $A \cup R^*$ are the potentially optimal bidders (i.e., those which correspond to some possible signal profile) since bidders are removed from $A \cup R^*$ when they are determined to be non-optimal. Initially $R^*$ is empty and at the beginning of each iteration of the outer while-loop, one randomly sampled active bidder $i$ is added to this set, increasing its size by one. The signal of bidder $i$ is then used to update $q_{min}$ and $q_{max}$ possibly eliminating a quality from $S \cap [q_{min}, q_{max}]$ and this may lead to a sequence of ``free'' signal discoveries, as discussed below.

Whenever a level $q$ is ruled out, there are four possibilities regarding the bidder who is optimal for that level, i.e., the bidder $\opta{q}$:
\begin{enumerate}
    \item If $\opta{q}$ is in $A$ and is not optimal for any other quality level in the updated interval $S \cap [q_{min}, q_{max}]$, then the first inner-while loop of the auction will remove that bidder from $A$ and use its signal to rule out one more quality level.
    \item If $\opta{q}$ is in $A$ and is also optimal for some other quality level in the updated interval $S \cap [q_{min}, q_{max}]$, then the iteration of the outer while-loop terminates without any additional operations and we proceed to the next iteration.
    \item If $\opta{q}$ is in $R^*$, and is not optimal for any other quality level in the updated interval $S \cap [q_{min}, q_{max}]$, then the second inner while-loop removes $\opta{q}$ from $R^*$ and we proceed to the next iteration of the outer while-loop.
    \item If $\opta{q}$ is in $R^*$, and is also optimal for some other quality level in the updated interval $S \cap [q_{min}, q_{max}]$, then the iteration of the outer while-loop terminates without any additional operations and we proceed to the next iteration.
\end{enumerate}

Considering these four possibilities, note that while the first case arises, the execution remains in the first inner while-loop and the size of $R^*$ remains unchanged. When the third case arises, the size of $R^*$ is first reduced by one (because the auction enters the second inner while-loop) and then proceeds to the next iteration of the outer while-loop, which may bring this up to the same size again. As a result, the third case does not increase the size of $R^*$ either. 

On the other hand, both cases 2 and 4 may lead to an increase of the size of $R^*$ by 1, since they terminate the current iteration of the outer while-loop and may proceed to the next one, which would add one more bidder to $R^*$.  However, at the end of each iteration of the outer while-loop, $A$ and $R^*$ contain only bidders that are optimal for some level in $S \cap [q_{min}, q_{max}]$ (all the others are removed from $A$ in the first inner while-loop and from $R^*$ in the second inner while-loop). At the end of each iteration of the outer while-loop, we have $q_{max}=q_{min}+(k-1) \cdot |A|$ since the only signals still not used are those of bidders in $A$ and these signals range from $0$ to $k-1$.  On the other hand, the consecutive qualities in $S$ are $k-1$ apart, so $|S \cap [q_{min}, q_{max}]| \leq |A| + 1$ (with equality if both $q_{min}$ and $q_{max}$ are in $S$).  

Therefore, we know that at the end of each iteration of the outer while-loop, every bidder in $A$ and $R^*$ is optimal for some level in $S \cap [q_{min}, q_{max}]$ and there are at most $|A|+1$ levels in that interval. If $R^*$ is empty at that point, this means that there can be at most one bidder in $A$ that is optimal for two levels in the interval. If $|R^*|=1$, then there are $|A|+1$ optimal bidders and $|A|+1$ optimal levels, so there is no bidder in $A$ or $R^*$ that is optimal in more than one levels. This means that in the next iteration of the outer while-loop, cases 2 and 4 given in Lemma \ref{lem:Rbound} cannot arise, and therefore the size of $R^*$ cannot increase beyond 1 by the end of any iteration of the outer while loop. \qed
\end{proof}

\begin{theorem} \label{thm:additiveKWelfare}
The signal discovery auction achieves a $5(k-1)$ approximation of the optimal welfare for instances with shared quality functions.
\end{theorem}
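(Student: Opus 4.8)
The plan is to run the same argument as in the proof of Theorem~\ref{thm:baseTheorem}, paying one extra factor of $k-1$ for the random residue $m$. Fix an instance, let $q$ be the true quality, $i^{*}=\opta{q}$ the optimal bidder, and $\mathrm{OPT}=\val{i^{*}}{q}$. Let $G$ be the event that the $m$ chosen by Mechanism~\ref{alg:ksampling} equals $q\bmod(k-1)$; since $m$ is uniform over the $k-1$ residues, $\Pr[G]=\tfrac{1}{k-1}$. I would show that, conditioned on $G$, the auction serves $i^{*}$ with probability at least $\tfrac15$. Since the realized welfare is always nonnegative and equals $\mathrm{OPT}$ whenever $i^{*}$ is served, this gives $\mathbb{E}[\val{\mechanism(I)}{q}]\ge\tfrac{1}{5(k-1)}\mathrm{OPT}$, i.e.\ the claimed $5(k-1)$-approximation.

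So condition on $G$. Then $q\equiv m\pmod{k-1}$, and since $q$ always lies in $[q_{min},q_{max}]$, it always lies in $S\cap[q_{min},q_{max}]$; hence $i^{*}$ is optimal for a surviving level throughout, and since bidders leave $A\cup R^{*}$ only once certified non-optimal, $i^{*}\in A\cup R^{*}$ for the entire run. Because $i^{*}$ can never be removed by a ``free'' discovery, $i^{*}$ wins if and only if it is never picked by a ``costly'' discovery; equivalently, if and only if by the time $i^{*}$ would be picked, $A$ has already shrunk to $\{i^{*}\}$. As each costly pick is uniform over the current $A$, I would fix a uniformly random permutation of the bidders at the outset and let the costly picks follow it (skipping bidders already removed for free). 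Exactly as in the binary case, using $|R^{*}|\le 2$ (Lemma~\ref{lem:Rboundk}) together with the window identity $q_{max}=q_{min}+(k-1)|A|$ at the end of each outer iteration (established inside that lemma's proof), the only bidders other than $i^{*}$ that can still be in contention when $A$ is small are the (at most four) bidders optimal for the levels of $S$ nearest to $q$, namely $q\pm(k-1)$ and $q\pm 2(k-1)$ (the $k$-ary analogues of $q^{*}\pm1,q^{*}\pm2$). Hence if $i^{*}$ follows all of those (at most four) bidders in the permutation, they have all left $A\cup R^{*}$ before $i^{*}$ would be picked, so $A=\{i^{*}\}$ and $i^{*}$ wins; in a uniformly random permutation this happens with probability at least $\tfrac15$.

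Finally I would check that serving $i^{*}$ actually yields $\mathrm{OPT}$: when $i^{*}$ is the last active bidder, line~\ref{lst:kprice} prices it at $\val{i^{*}}{q'}$ for the smallest $q'\in S\cap[q_{min},q_{max}]$ at which $i^{*}$ is optimal; under $G$ we have $q\in S\cap[q_{min},q_{max}]$ and $i^{*}$ optimal at $q$, so $q'\le q$ and $\val{i^{*}}{q'}\le\val{i^{*}}{q}$ by weak monotonicity, the test in line~\ref{lst:kpriceIR} passes, and the good is awarded, contributing welfare $\val{i^{*}}{q}=\mathrm{OPT}$.

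I expect the main obstacle to be the ``at most four nearby bidders in contention'' step: one must argue that once the contention window $S\cap[q_{min},q_{max}]$ (always a block of the arithmetic progression $S$ with common difference $k-1$ containing $q$) has shrunk, no bidder optimal for a level of $S$ more than two steps from $q$ can remain, which uses both the $|R^{*}|\le 2$ bound and the tie between the window length and $|A|$. This is precisely the $k$-ary mirror of the bookkeeping behind Lemma~\ref{lem:Rbound} and Theorem~\ref{thm:baseTheorem}, with $q\pm(k-1),q\pm 2(k-1)$ playing the role of $q^{*}\pm1,q^{*}\pm2$; the boundary cases where the window abuts $0$ or $n(k-1)$ (so that fewer than four nearby levels exist) only help, and no genuinely new idea beyond the binary analysis should be required.
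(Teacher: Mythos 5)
Your proposal is correct and follows essentially the same route as the paper's proof: condition on the randomly drawn residue $m$ matching $q\bmod(k-1)$ (probability $1/(k-1)$), and then replay the Theorem~\ref{thm:baseTheorem} argument---$i^{*}\in A\cup R^{*}$ throughout, $|R^{*}|\le 2$ via Lemma~\ref{lem:Rboundk}, and a random-permutation argument giving $i^{*}$ last among the at most five contenders with probability $1/5$. The only differences are cosmetic: you correctly name the nearby levels as $q\pm(k-1)$ and $q\pm 2(k-1)$ (the paper's proof carries over the binary-case wording $q^{*}\pm1,q^{*}\pm2$) and you explicitly verify that when $i^{*}$ survives the test in line~\ref{lst:kpriceIR} passes, which the paper leaves implicit.
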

\begin{proof}
Given that we guess the correct quality residue (i.e., the optimal quality is in the set $S$ we randomly select) the proof follows Theorem \ref{thm:baseTheorem} and its proof directly.  Let $q^*$ denote the true quality and $i^*$ the optimal bidder for this quality.  
We first observe that a bidder is removed from $A \cup R^*$ only if they are determined to be non-optimal.  Thus, we know that $i^* \in A \cup R^*$ throughout the running of the algorithm.  By Lemma \ref{lem:Rbound} we know that $|R^*| \leq 2$ throughout the running of the algorithm.  Thus, there are at most $5$ distinct bidders who can be in $A \cup R^*$ at the end of the algorithm -- $i^*$ and the (up to) four other bidders optimal for signal profiles corresponding to $q^*-2$, $q^*-1$, $q^*+1$, or $q^*+2$ high signal bidders.  Provided that these four other bidders enter $R^*$ before $i^*$ we then obtain the optimal welfare.  We conclude by noting that, since the choices of the bidder to be added to $R^*$ is made uniformly at random, we can envision the order in which bidders are added to $R^*$ as a uniform at random permutation over the bidders fixed at the outset.  In a uniform random permutation, $i^*$ follows these four bidders with probability $1/5$, so we obtain a $5p$-approximation where $p$ is the probability with which we select the correct residue.  Since we guess this uniformly at random, we obtain a $5(k-1)$-approximation.
\qed
\end{proof}

\begin{theorem} \label{thm:additiveKRevenue}
The pricing rule of the signal discovery auction can be adjusted to achieve revenue which is a $10(k-1)$ approximation of the optimal welfare for instances with shared quality functions.
\end{theorem}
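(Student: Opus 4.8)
The plan is to mirror the proof of Theorem~\ref{thm:revTheorem}, modifying only the pricing line~\ref{lst:kprice} of Mechanism~\ref{alg:ksampling}. Instead of charging the winner $i$ the value $\val{i}{q'}$ for the \emph{smallest} quality $q'\in S\cap[q_{min},q_{max}]$ at which $i$ is optimal, the mechanism will pick such a $q'$ \emph{uniformly at random} among all qualities in $S\cap[q_{min},q_{max}]$ for which $i$ is optimal, and award the good at price $\val{i}{q'}$ \emph{only if} $\val{i}{\quality{\sVec}}\ge\val{i}{q'}$ (the safeguard of line~\ref{lst:kpriceIR}, kept verbatim). The safeguard ensures the winner is never charged above her realized value, so ex-post individual rationality is preserved; and since the winner is still determined exactly as before and the charged price depends only on quantities the winner cannot influence through her own reported signal, ex-post incentive compatibility --- and the obvious ex-post equilibrium / clock-auction implementation, exactly as in the corollary to Theorem~\ref{lem:clockAuctionBinary} --- carries over just as at the end of the proof of Theorem~\ref{thm:revTheorem}.

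The one genuinely new ingredient, and the step I expect to be the crux, is the claim that at termination the winner $i$ is optimal for \emph{at most two} qualities in $S\cap[q_{min},q_{max}]$. This follows from the invariant already proved inside Lemma~\ref{lem:Rboundk}: at the end of each outer-loop iteration $q_{max}=q_{min}+(k-1)|A|$, so when $|A|=1$ the window $[q_{min},q_{max}]$ has width $k-1$; since consecutive qualities of $S$ are spaced exactly $k-1$ apart, $|S\cap[q_{min},q_{max}]|\le|A|+1=2$. Hence, conditioned on the event that the true quality $q^*=\quality{\sVec}$ lies in the sampled residue class (probability $1/(k-1)$) and that the winner is the welfare-optimal bidder $i^*=\opta{q^*}$, we have $q^*\in S\cap[q_{min},q_{max}]$ with $i^*$ optimal for it, so the randomized rule selects exactly $q'=q^*$ with probability at least $1/2$, in which case the price equals $\val{i^*}{q^*}$ and the full optimal welfare is collected as revenue.

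It then remains to multiply the three independent success probabilities. By the analysis of Theorem~\ref{thm:additiveKWelfare}, conditioned on guessing the residue class correctly, the winner is $i^*$ with probability at least $1/5$: there are at most five candidate bidders --- those optimal for qualities $q^*-2,\dots,q^*+2$ --- and, viewing the order of insertion into $R^*$ as a uniformly random permutation, $i^*$ is inserted last among them with probability $1/5$. Combining this with the $1/(k-1)$ probability of guessing the residue and the $1/2$ probability that the randomized price equals $\val{i^*}{q^*}$ gives expected revenue at least $\frac{1}{10(k-1)}\max_{i\in N}\val{i}{\quality{\sVec}}$, which is the claimed $10(k-1)$-approximation; the remaining details (the safeguard and the clock implementation) are routine repetitions of earlier arguments.
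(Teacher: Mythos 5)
Your proposal is correct and follows essentially the same approach as the paper: replace the deterministic choice of the smallest admissible quality at line~\ref{lst:kprice} with a uniformly random choice among the qualities of $S\cap[q_{min},q_{max}]$ at which the winner is optimal, keep the safeguard of line~\ref{lst:kpriceIR} for ex-post IR, observe that the final window has width $k-1$ (equivalently, $q_{max}=q_{min}+(k-1)|A|$ with $|A|=1$) so $|S\cap[q_{min},q_{max}]|\le 2$, and combine the $1/(k-1)$ residue-guess probability, the $1/5$ probability of retaining $i^*$, and the $1/2$ probability of landing on the true quality. The only cosmetic difference is that you make the $5\times 2\times(k-1)$ decomposition fully explicit, whereas the paper states the $1/(k-1)$ and $1/2$ factors and leaves the $1/5$ implicit via Theorem~\ref{thm:additiveKWelfare}.
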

\begin{proof}
Given that we guess the correct quality residue which occurs with probability $1/(k-1)$, if in line \ref{lst:kprice} of Mechanism \ref{alg:ksampling} we instead select a random quality $q' \in S \cap [q_{min}, q_{max}]$ for which $i$ is optimal and this is the true quality, we extract all of the welfare as revenue.  Note that $i$ is the only bidder with unknown signal and this signal can only affect the quality by at most $k-1$.  Also, adjacent quality levels in $S \cap [q_{min}, q_{max}]$ are $k-1$ apart.  Thus, there are at most two levels in $S \cap [q_{min}, q_{max}]$ for which $i$ is optimal so we select the correct quality with probability $1/2$, yielding the $10(k-1)$-approximation. Note that in line \ref{lst:kpriceIR} we only allocate the item if the price is below the true value of $i$, so we preserve ex-post IC-IR with this modification. \qed

\end{proof}

\begin{theorem} \label{lem:clockAuctionAdditive}
Mechanism \ref{alg:ksampling} can be implemented as an ascending clock auction over the signals wherein {\clockStrategy} is an obvious ex-post equilibrium.
\end{theorem}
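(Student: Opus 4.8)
The plan is to mirror the clock-auction construction and OXP argument from Theorem~\ref{lem:clockAuctionBinary}, but now with personalized signal clocks that ascend through the values $\{0,1,\dots,k-1\}$ rather than just from $0$ to $1$. First I would describe the clock implementation: every bidder's clock starts at $0$, and whenever Mechanism~\ref{alg:ksampling} would ``discover'' the signal of bidder $i$ (either as a costly sampled bidder or as a free discovery of a non-optimal bidder), we instead raise $i$'s clock step by step; at each step $i$ is asked whether she accepts the increment, and {\clockStrategy} means $i$ accepts exactly as long as the clock level is at most her true signal $s_i$ and exits the instant it would exceed $s_i$. When a bidder exits at clock level $t$, the auction has learned $s_i = t-1$ (or $s_i=t$ if the clock reached the top without an exit), so the information revealed is identical to the direct-revelation mechanism, and hence the allocation is the same. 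For the winner, the take-it-or-leave-it price is implemented by raising her clock to the level consistent with the chosen quality $q'$ from line~\ref{lst:kprice} and asking her to accept; by the IR check in line~\ref{lst:kpriceIR} she is charged $v_i(q')$ only when this is at most $v_i(q(\sVec))$.

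The core of the proof is verifying that {\clockStrategy} is an obvious ex-post equilibrium, i.e., that at every history where bidder $i$ is asked to raise her clock from level $t-1$ to $t$, the \emph{best} utility $i$ could obtain by deviating (over all true type profiles of the others consistent with the history) is at most the \emph{worst} utility she could obtain by playing truthfully. I would split into two cases exactly as in the binary proof. If $s_i \geq t$ (so truthful play accepts): by deviating and exiting, $i$ permanently loses the good, obtaining utility $0$; whereas if she stays, in the worst case over consistent type profiles she either loses (utility $0$) or wins at a price equal to some $v_i(q')$ with $q' \le q(\sVec)$, which by monotonicity of $v_i$ and the IR check yields utility $\ge 0$. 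So truthful play's worst case ($0$) weakly dominates deviation's best case ($0$). If $s_i < t$ (so truthful play exits): truthful play gives utility exactly $0$; a deviation that accepts the increment can only lead either to still losing the good (utility $0$) or to eventually winning at a price corresponding to a quality level $q'$ that reflects a reported signal strictly larger than $s_i$. The key sub-step is to argue that in the latter event the threshold/price quality $q'$ satisfies $q' \ge q(\sVec)$, so that $v_i(q') \ge v_i(q(\sVec)) = v_i$ and the winning utility $v_i - v_i(q') \le 0$; here I would use that line~\ref{lst:kprice} picks the \emph{smallest} quality in $S\cap[q_{min},q_{max}]$ for which $i$ is optimal, together with the fact that overstating $s_i$ only increases the inferred quality interval. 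Hence deviation's best case ($0$) equals truthful play's worst case ($0$) here too.

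The main obstacle I anticipate is the bookkeeping in the second case: unlike the binary setting, a bidder now has several intermediate clock levels, and a deviation need not be ``accept once more and stop'' — $i$ could accept several extra increments before exiting, or ride the clock to the top. I need to check that \emph{every} such over-reporting deviation either still loses or wins at a quality $q'$ with $v_i(q') \ge v_i$, uniformly over all histories and all true type profiles of the others consistent with that history; this requires carefully tracking how an inflated report of $s_i$ shifts $q_{min}$ (it only raises it) and therefore shifts the set $S\cap[q_{min},q_{max}]$ and the selected $q'$ upward relative to the true quality. A secondary, more mechanical point is confirming that the free-discovery steps of Mechanism~\ref{alg:ksampling} are faithfully simulated by clocks even though those bidders are already known to be non-optimal — but since a non-optimal bidder cannot win regardless, {\clockStrategy} is trivially obviously dominant for them, so this reduces to the same two-case analysis. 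I would close by noting that, since the induced allocation and prices coincide with Mechanism~\ref{alg:ksampling}, Theorems~\ref{thm:additiveKWelfare} and~\ref{thm:additiveKRevenue} transfer verbatim to the clock implementation. \qed
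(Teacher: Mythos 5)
Your proof is correct and takes essentially the same approach as the paper: equip each bidder with a personalized signal clock, raise it one level at a time whenever Mechanism~\ref{alg:ksampling} would discover her signal, split the OXP check into the cases ``$s_i \ge t$'' and ``$s_i < t$'' exactly as in Theorem~\ref{lem:clockAuctionBinary}, and invoke the IR guard on line~\ref{lst:kpriceIR} to close the first case.

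One clarifying remark: the ``main obstacle'' you anticipate---carefully tracking how an inflated report of $s_i$ shifts $q_{\min}$, $S \cap [q_{\min},q_{\max}]$, and the chosen $q'$---is actually a non-issue, because only two kinds of bidders ever have their clocks raised, and in both cases the argument is immediate. Any bidder whose clock rises \emph{during} the discovery phase (sampled into $R^*$, or a free discovery) is by construction removed from $A$ and can never win, so her utility is $0$ under every continuation; her over-report does shift the interval, but that only affects other bidders' outcomes, not her own. The eventual winner, on the other hand, is precisely the bidder whose signal is \emph{never} used to update $q_{\min},q_{\max}$; her clock is raised only at the very end, to the threshold $s'_i$ implied by a $q'$ that is already fixed by the other bidders' signals, so there is no shifting interval to track. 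In that final take-it-or-leave-it stage, if she over-accepts past her true $s_i$ and wins, then $s'_i > s_i$, hence $q' = q(s'_i,\sVec_{-i}) \ge q(\sVec)$ and her utility is $v_i(q(\sVec)) - v_i(q') \le 0$---this is exactly the observation the paper makes by considering the deviation that accepts all the way to $k-1$. With that simplification your two-case analysis goes through cleanly, and you do not need the appeal to line~\ref{lst:kprice} picking the \emph{smallest} quality.
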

\begin{proof}
The proof follows closely the proof of Theorem \ref{lem:clockAuctionBinary}. Rather than asking bidders to report their signals we may instead equip each bidder with a signal clock.  The clock of all bidders begin at $0$ and when bidder $i$ would be sampled by the above mechanism, we instead raise the clock of $i$ by one signal level at a time until she rejects the signal level or reaches signal $k-1$.\footnote{To ensure that bidders cannot ``identify'' that they have been used for signal discovery, we can, with some small probability, instead run a uniform ascending signal clock auction which awards the item to the bidder with the highest signal.  This sacrifices a small constant in the approximation guarantee, but enhances the practicality of our auction as all bidders are incentivized to continue to remain in the auction even as their clock is rising (which would indicate that they were used for signal discovery in Mechanism \ref{alg:ksampling}).}  If $i$ rejects the new clock signal level (i.e., permanently exits the auction), she cannot win the item (as above) \emph{regardless of her beliefs} about the signals of the remaining bidders.  

If the true signal of $i$ is $k-1$, for any profile of signals of the remaining bidders (assuming these signals are true) the worst utility $i$ can obtain by accepting the clock signal level throughout the raising process (until it reaches $k-1$) is $0$ (by losing the item or by winning the item and being charged exactly her welfare).  Thus, at any point in the auction, regardless of the history, when $i$ is approached to increase her clock signal level, the best utility $i$ can obtain by not accepting all increased clock signal levels (thereby necessarily losing the good) is weakly less than the worst utility $i$ can obtain by accepting the increased clock signal level.  On the other hand, if the true signal of $i$ is strictly less than $k-1$, for any profile of signals of the remaining bidders (assuming these signals are true) if she instead continues to accept the clock signal level even at $k-1$ (if she rejects at any point she loses the item regardless of when she rejects) she either will continue to lose the auction (thereby obtaining a utility of $0$) or win the auction at a quality higher than the actual underlying quality of the good.  Since the threshold signal of $i$ would then be $k-1$, she would necessarily be charged a price weakly higher than her value for the good and she would obtain non-positive utility.  Thus, in either case, truthfully responding whether or not the clock signal level is above a bidder's signal is an obvious ex-post equilibrium. \qed

\end{proof}

\begin{theorem}\label{thm:expertsKAdditive}
Mechanism \ref{alg:ksampling} can be modified to yield a $5(k-1)\nExperts$-approximation of the optimal welfare and achieve revenue which is a $10(k-1)\nExperts$-approximation of the optimal welfare for shared quality functions with $\nExperts$ expertise types.
\end{theorem}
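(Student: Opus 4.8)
The plan is to mimic the reduction from the symmetric ($\nExperts=1$) case to general $\nExperts$ that was carried out for binary signals in Section~\ref{sec:binary}. The modified mechanism first draws $\nExperts'\in\{1,\dots,\nExperts\}$ uniformly at random and commits to the guess that the welfare-optimal bidder lies in expertise group $N_{\nExperts'}$. It then immediately rejects every bidder outside $N_{\nExperts'}$ and records their signals; this fixes every coordinate $q_j$ of the shared quality vector $\qVec$ with $j\neq\nExperts'$, so the only remaining unknown is $q_{\nExperts'}=\sum_{i\in N_{\nExperts'}}s_i$. For each bidder $i\in N_{\nExperts'}$ the map $q_{\nExperts'}\mapsto v_i(\qVec)$ obtained by plugging in the now-known coordinates is still weakly increasing in the signals of $N_{\nExperts'}$, and $q_{\nExperts'}$ ranges over $\{0,1,\dots,|N_{\nExperts'}|(k-1)\}$, so the residual problem is exactly an instance of the shared-quality problem with a single expertise group. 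We therefore run Mechanism~\ref{alg:ksampling} on the bidders of $N_{\nExperts'}$, using the induced one-dimensional valuations to decide the winner and her price.

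First I would argue the welfare bound. Conditioning on the event that the guess $\nExperts'$ is correct -- i.e.\ that $\opta{\sVec}\in N_{\nExperts'}$ -- the rejection of the bidders outside $N_{\nExperts'}$ is free (it discards only bidders that are not welfare-optimal), and what remains is a single-group shared-quality instance, so by Theorem~\ref{thm:additiveKWelfare} Mechanism~\ref{alg:ksampling} allocates to $\opta{\sVec}$ with probability at least $1/(5(k-1))$. Since the guess is correct with probability $1/\nExperts$ and the two sources of randomness are independent, the optimal bidder is served with probability at least $1/(5(k-1)\nExperts)$, which gives the claimed welfare approximation.

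Next I would handle revenue by invoking the pricing modification of Theorem~\ref{thm:additiveKRevenue}: replace line~\ref{lst:kprice} so that, conditioned on guessing the correct quality residue $m$, the mechanism posts the price corresponding to a uniformly random quality level in $S\cap[q_{min},q_{max}]$ for which the surviving bidder $i$ is optimal. Exactly as in the proof of Theorem~\ref{thm:additiveKRevenue}, at most two such levels exist, so with probability at least $1/2$ this level is the true quality; combined with the (independent) event of probability at least $1/5$ that the surviving bidder is $\opta{\sVec}$, the full optimal welfare is extracted as revenue, and the if-statement in line~\ref{lst:kpriceIR} preserves ex-post IC-IR. Chaining the probabilities -- $1/\nExperts$ for the expertise guess, $1/(k-1)$ for the residue guess, $1/5$ for the random permutation placing the four neighbouring optimal bidders before $\opta{\sVec}$, and $1/2$ for the price level -- yields revenue at least $1/(10(k-1)\nExperts)$ times the optimal welfare. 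Finally, since the only change relative to Mechanism~\ref{alg:ksampling} is the up-front rejection of a publicly determined set of bidders followed by the unchanged clock-raising on the remaining group, the clock-auction implementation and the OXP property of {\clockStrategy} follow verbatim from Theorem~\ref{lem:clockAuctionAdditive}.

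The main obstacle I anticipate is purely bookkeeping: checking that restricting $v_i(\qVec)$ to a single varying coordinate really lands us back in the hypotheses of Section~\ref{sec:linear} -- in particular that ``optimal for quality level $q$'' must be read as optimal \emph{within $N_{\nExperts'}$} for the quality vector with $q_{\nExperts'}=q$ and the other coordinates frozen -- and that the invariants used in Lemma~\ref{lem:Rboundk} ($q_{max}=q_{min}+(k-1)|A|$ at the end of each outer iteration, and consecutive elements of $S$ being $k-1$ apart) are unaffected by the fact that $q_{min},q_{max}$ now also carry the fixed contribution of the other groups. Both points are immediate once one observes that the frozen coordinates only shift the interval $[q_{min},q_{max}]$ by an additive constant, changing neither its length nor the spacing of $S$.
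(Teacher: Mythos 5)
Your proof takes essentially the same approach as the paper's: draw an expertise group uniformly at random, reject and observe the bidders outside it (reducing to a single-group instance with the other quality coordinates frozen), run Mechanism~\ref{alg:ksampling} on that group, and multiply the $1/\nExperts$ guessing probability with the $1/(5(k-1))$ welfare and $1/(10(k-1))$ revenue guarantees from Theorems~\ref{thm:additiveKWelfare} and~\ref{thm:additiveKRevenue}. Your write-up is somewhat more explicit than the paper's about why the residual problem genuinely falls back into the hypotheses of the single-group setting, but the underlying argument and the chain of probabilities are identical.
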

\begin{proof}
The proof of this theorem follows directly from the proofs of Theorems \ref{thm:additiveKWelfare} and \ref{thm:additiveKRevenue}.  

Consider the algorithm which first uniformly at random selects some type $\nExperts'$ of expertise and guesses that the optimal bidder belongs to this type.  It then samples all bidders outside expertise type $\nExperts'$ and updates $q_{\text{min}}$ to be the quality obtained by the signal vector of the sampled bidders' signals and $0$ for all bidders of expertise $\nExperts'$.  Finally, the auction then restricts attention to the qualities which may arise for all possible signal profiles of bidders of expertise $\nExperts'$ and runs Mechanism \ref{alg:ksampling} for these bidders and qualities. 

We begin with the welfare guarantee.  Given that we select the type of expertise with the welfare-optimal bidder we obtain welfare which achieves a $5(k-1)$-approximation to the optimal welfare from running Mechanism \ref{alg:sampling} on the correct type of expertise (discarding the other bidders).  Since we select every expertise type with equal probability (i.e., $1/\nExperts$) we obtain a $5(k-1)\nExperts$-approximation as desired. For revenue, given that we select the type of expertise with the welfare-optimal bidder we obtain revenue achieving a $10(k-1)$-approximation to the optimal welfare by instead selecting a random quality level $q' \in S \cap [q_{min}, q_{max}]$ for which $i$ is optimal (see the proof of Theorem \ref{thm:additiveKRevenue} for additional details).  Since we select the correct expertise type with probability $1/\nExperts$ we obtain the $10(k-1)\nExperts$-approximation as desired.\qed
\end{proof}

\section{General Valuation Functions and Signal Values}\label{sec:general}

We now turn to the more general case where the quality of the good is any weakly increasing function of the signals that treats bidders with the same expertise type symmetrically. We provide an approximation lower bound for 
any allocation function that is \emph{monotone}: a necessary condition of ex-post IC. We conclude our results by proving that there exists a universally IC-IR auction that matches the approximation ratio lower bound. We can adjust the mechanism to achieve revenue that is  $k\cdot (\nExperts\binom{k}{2} +1)$ approximation of the welfare. 
 
 \begin{theorem}\label{thm:genlb}
 No ex-post incentive compatible auction can get more than a $\nExperts \binom{k}{2}+1$-approximation to the optimal welfare. Also, no universally ex-post IC-IR auction can obtain revenue more than a  $\nExperts \binom{k}{2}+1$ fraction of the revenue. 
 \end{theorem}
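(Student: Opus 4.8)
The plan is to extend the lower-bound construction used above for shared quality functions, exploiting the extra freedom that here the quality may be an \emph{arbitrary} monotone function (symmetric within each expertise group) rather than a weighted average. This is precisely what buys the jump from $\nExperts(k-1)+1$ to $\nExperts\binom{k}{2}+1$: a weighted average can separate only the $k-1$ single-signal decrements within a group, whereas an \emph{injective} quality function can separate all $\binom{k}{2}$ of them. So the first step is to fix the quality to be any monotone, symmetric-within-each-group, injective function of the signal profile --- such functions exist (e.g.\ encode each group's signal multiset as an integer in a base exceeding the number of bidders, then combine the groups' codes in a sufficiently large mixed radix).

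For the instance, in each group $g$ and for every pair $0\le a<b\le k-1$ I would introduce a bidder $i_{g,a,b}$ whose signal in a fixed ``top'' profile $\sVec^*$ equals $b$ (so group $g$ has $b$ bidders of signal $b$ for each $b$, i.e.\ $\binom{k}{2}$ bidders in total), plus one extra bidder $i_0$ of signal $0$. Let $\sVec^{(g,a,b)}$ be $\sVec^*$ with only the signal of $i_{g,a,b}$ lowered from $b$ to $a$. The combinatorial core of the argument is that the $\nExperts\binom{k}{2}$ profiles $\sVec^{(g,a,b)}$ together with $\sVec^*$ have pairwise distinct qualities: lowering a signal-$b$ bidder of group $g$ to $a$ turns that group's signal multiset $M^*_g$ into $M^*_g-\{b\}+\{a\}$, and from this modified multiset one recovers $b$ as the value whose multiplicity decreased and $a$ as the value whose multiplicity increased, so these $\binom{k}{2}$ multisets (and $M^*_g$ itself) are all distinct; since deviations in group $g$ leave the other groups at their top multisets, the full profiles differ, and injectivity of the quality map yields distinct qualities. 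Finally, order the $M:=\nExperts\binom{k}{2}$ deviation profiles by increasing quality as $\sVec^{(1)},\dots,\sVec^{(M)}$ (all strictly below $\sVec^*$), and let the bidder attached to $\sVec^{(r)}$ have a weakly increasing valuation that is essentially $0$ below the quality of $\sVec^{(r)}$ and essentially $H^r$ at and above it, while $i_0$'s valuation is essentially $H^{M+1}$ at and above the quality of $\sVec^*$, where $H$ is a large constant. Taking these valuations strictly increasing (a negligible perturbation) ensures the monotonicity argument below is forced by incentive compatibility, exactly as in the earlier lower bounds.

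The rest parallels the preceding lower-bound proofs. At $\sVec^{(r)}$ the unique optimal bidder is the one attached to it, with value $\approx H^r$, and every other bidder has value $\le H^{r-1}+o(1)$, so any $\alpha$-approximation must allocate to that bidder at $\sVec^{(r)}$ with probability at least $1/\alpha-o(1)$, and likewise for $i_0$ at $\sVec^*$. Since $\sVec^{(r)}$ differs from $\sVec^*$ only in the (lowered) signal of its attached bidder, monotonicity of the allocation rule of any (universally) ex-post incentive compatible mechanism carries this probability bound over to $\sVec^*$; summing over the $M$ deviation-bidders and $i_0$ --- which are $M+1$ distinct bidders --- and using that the allocation probabilities at $\sVec^*$ sum to at most $1$ gives $(M+1)(1/\alpha-o(1))\le 1$, hence $\alpha\ge M+1=\nExperts\binom{k}{2}+1$ as $H\to\infty$. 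For the revenue claim, the mechanisms in question are individually rational, so their expected revenue never exceeds their expected welfare; hence the same instances that rule out an $\alpha$-approximation of the optimal welfare also rule out obtaining revenue within a factor $\alpha<\nExperts\binom{k}{2}+1$ of it. (For $k=2$ this recovers the $\nExperts+1$ bound.)

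The main obstacle is getting the combinatorial construction exactly right --- in particular choosing the per-group signal multiplicities ($b$ copies of each signal $b$) so that the single-bidder decrements yield $\binom{k}{2}$ distinct qualities while each decrement still changes only one bidder's signal away from the common profile $\sVec^*$, which is what lets monotonicity transfer all $\nExperts\binom{k}{2}+1$ probability lower bounds to $\sVec^*$ simultaneously. Beyond that, the geometric valuations, the standard implication that ex-post incentive compatibility forces a monotone allocation rule, and the union-bound counting are routine and mirror the earlier arguments.
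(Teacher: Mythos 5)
Your proposal is correct and follows essentially the same route as the paper's proof. The bidder set (one bidder $i_{g,a,b}$ per group $g$ and per pair $0\le a<b\le k-1$, plus one top bidder at signal $0$), the top profile $\sVec^*$, the $\nExperts\binom{k}{2}$ single-bidder decrement profiles, the geometric valuations forcing a unique optimal bidder at each profile, and the monotonicity-plus-feasibility counting at $\sVec^*$ all match the paper's construction (where your $i_{g,a,b}$ is the paper's bidder $(b,a,g)$ and your $i_0$ is the paper's $i^*$). The only presentational difference is that you factor the valuations through an explicit injective, monotone, within-group-symmetric quality function and then order deviation profiles by quality, whereas the paper defines the valuations directly on signal profiles (value $M^j$ for bidder $(i,j,l)$ on $S_{(i,j,l)}$ and above) and argues distinguishability of the $S_{(i,j,l)}$ classes via the multiset structure; both rely on the same combinatorial fact that the modified multisets $M^*_g-\{b\}+\{a\}$ are pairwise distinct. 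The revenue corollary via individual rationality is also the same.
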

  \begin{proof}

Similarly to the shared quality functions case, we will construct a set of bidders, valuation functions, and signal profiles such that we can find some signal vector for which at least $\nExperts\binom{k}{2} +1$ bidders all require positive probability of allocation by enforcing monotonicity constraints required to achieve ex-post incentive compatibility.

We define the set of bidders
$$\agents=\{i^*\}\cup \{(i,j,l)\mid i\in\{1,\dots,k-1\}, j\in \{0,\dots,i-1\},l=\{1,\dots,\nExperts\}\}.$$

Note that the total number of bidders is exactly  $|N|=\nExperts\binom{k}{2}+1$. We consider the following signal profile $\sVec$ where $s_{(i,j,l)}=i$ and $s_{i^*}$ is arbitrary. We design a valuation function such that either monotonicity or feasibility would be violated at $\sVec$ if the allocation achieves an approximation ratio better than $|\agents|$.

Additionally, we define the set of signal profiles $S_{(i,j,l)}$ where a bidder with expertise type $l$ and signal $i$ reduces her signal to $j<i$. Due to our constraints, any valuation function has the same outputs among all signal profiles in any $S_{(i,j,l)}$.

Now we are ready to define the valuation function for each bidder. Let $M$ be an arbitrarily large value. For any bidder $(i,j,l)$ we have:

$$
\val{(i,j,l)}{\sVec'} =
    \begin{cases}
        M^j&  
        \sVec'\in S_{(i,j,l)}\\
        M^j&   
        \sVec' \text{ is obtained by increasing signals from a profile in } S_{(i,j,l)}
        \\
        0 & \text{otherwise}
    \end{cases}
$$
For bidder $i^*$ we have
$$
\val{i^*}{\sVec'} =
    \begin{cases}
        M^{k}&  \sVec'=\sVec
        \\
         M^k&  \sVec' \text{ is obtained by increasing signals from }\sVec
        \\
        0 & \text{otherwise}
    \end{cases}
$$

By the definition we can easily deduce that the valuation is monotone. We need to argue that it is symmetric across expert levels:

\noindent {\bf Expert Symmetry:} We need to argue that this valuation function we define satisfies expert symmetry: 
  Consider any  two signal profiles in
$S_{(i,j,l)}$ and $S_{(i',j',l')}$ respectively. There are three cases:
\begin{itemize}
    \item $l\neq l'$: Clearly the valuation function can differ among these profiles. 
    \item $l=l'$: but $i\neq i'$: The number of type-$l$ experts  with signal $i$ in any strategy profile $S_{(i,j,l)}$ is $i-2$ while there are  $i-1$  such experts in every strategy profile in  $S_{(i',j',l')}$. Therefore the valuation call tell these profiles apart.
    \item $l=l'$, $i=i'$ but $j\neq j'$: Clearly these two signals agree on expertise type and signal count combination apart from $j$ and $j'$ at expert level $l$: the signal from $S_{(i,j,l)}$ has one more signal $j$ and $S_{(i,j',l)}$ has one more signal $j'$.
\end{itemize}

  Note that no signal in $S_{(i,j,l)}$ can be obtained by increasing a signals from a profile in $S_{(i',j',l')}$ if either $i'\neq i$ or $l\neq l'$. Therefore, any bidder $(i',j',l')$ has a positive allocation probability in $S_{(i,j,l)}$ only if $i=i'$ and $l=l'$.  First, bidder $(i,j,l)$ who has valuation $M^j$ for all signals in $S_{(i,j,l)}$. Since signal profiles in $S_{(i,j,l)}$ can be obtained only by increasing signals from profiles in $S_{(i,j',l)}$ for $j'<j$, bidders  $(i,j',l)$ with $j'<j$ have valuation $M^{j'}$ while all the rest have value $0$. As a result, any bidder $(i,j,l)$ is optimal at any signal $ S_{(i,j,l)}$ and any other bidder has at most $1/M$-fraction of the optimal value. Finally, the same is true for signal $\sVec$ where bidder $i^*$ is optimal with value $M^k$ and any other bidder is has at most $1/M$-fraction of it.

Assume that there exists a monotone feasible allocation function that achieves an approximation ratio strictly better than $\nExperts \binom{k}{2} +1$.
Therefore, as $M$ increases it must be the case that the optimal bidder must be allocated with probability strictly higher than $(\nExperts \binom{k}{2} +1)^{-1}$ at each signal profile. As a result, for every bidder $(i,j,l)$ must be allocated with at least that much probability in any signal profile $S_{(i,j,l)}$.
By monotonicity this bidder must also be allocated with at least that much   probability when the signal profile is $\sVec$.
 However this is impossible as total allocation probability at $\sVec$ would exceed $1$ and would violate the assumption that the allocation is feasible.

Finally, any mechanism that is ex-post IR has revenue less than social welfare. As a result, no ex-post IC-IR mechanism obtains revenue more than $\nExperts \binom{k}{2}+1$ fraction of the optimal social welfare.  
 \qed
 \end{proof}

\begin{theorem}\label{thm:genwelfare}
There exists a universally ex-post IC-IR auction 
that achieves a $\nExperts\binom{k}{2} +1$-approximation to the optimal welfare.
 \end{theorem}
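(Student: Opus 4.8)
The plan is to first build a randomized single‑item allocation rule $x$ that is (i) monotone in each bidder's own signal, (ii) feasible, i.e.\ $\sum_{i}\alloc{i}{\sVec}\le 1$ at every profile, and (iii) places probability at least $1/(\nExperts\binom{k}{2}+1)$ on a welfare‑maximizing bidder at every profile; and then to realize $x$ as a distribution over \emph{deterministic} monotone feasible allocation rules, each of which, equipped with threshold payments (a winner $i$ pays $\val{i}{\cdot}$ evaluated at her critical signal, which by monotonicity of $\val{i}{\cdot}$ is at most her true value), is a deterministic ex‑post IC‑IR auction. The resulting distribution is universally IC‑IR, and by (iii) its expected welfare at every $\sVec$ is at least $\opt{\sVec}/(\nExperts\binom{k}{2}+1)$.

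For the construction, fix the selector $\opta{\cdot}$ to return, say, the lowest‑indexed welfare‑maximizer, so that $\opta{\sVec}$ is a function of the value vector only; and recall that in this model each $\val{i}{\cdot}$ depends only on the per‑expertise‑type histogram of reported signals, so two profiles with the same histogram have the same value vector and hence the same $\opta{\cdot}$. Let $B(\sVec)$ be the set of bidders $i$ for which $\opta{(s_i',\sVec_{-i})}=i$ for some $s_i'\le s_i$, and set $\alloc{i}{\sVec}=\frac{1}{\nExperts\binom{k}{2}+1}\,\mathbf{1}[i\in B(\sVec)]$. Property (iii) is immediate: $\opta{\sVec}\in B(\sVec)$ (witness $s_i'=s_i$) and $\val{\opta{\sVec}}{\sVec}=\opt{\sVec}$. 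Property (i) holds because $i$ can only enter, never leave, $B(\cdot)$ as its own signal rises (the same witness keeps working and $\sVec_{-i}$ is unchanged). So the entire content is Property (ii): the combinatorial bound $|B(\sVec)|\le\nExperts\binom{k}{2}+1$.

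I would prove this bound by injecting $B(\sVec)$ into the set of ``moves'' $\{\mathrm{null}\}\cup\{(l,a,a'):l\in\{1,\dots,\nExperts\},\ 0\le a'<a\le k-1\}$, which has exactly $\nExperts\binom{k}{2}+1$ elements: map $i\in B(\sVec)$ to $\mathrm{null}$ if $\opta{\sVec}=i$, and otherwise to $(l,s_i,a')$, where $l$ is $i$'s expertise type and $a'$ is the smallest signal below $s_i$ witnessing $i\in B(\sVec)$. Two bidders mapped to $\mathrm{null}$ both equal $\opta{\sVec}$. If $i\ne j$ were both mapped to $(l,a,a')$, then $i,j$ would have the same type $l$ and the same signal $a$ in $\sVec$, with $\opta{(a',\sVec_{-i})}=i$ and $\opta{(a',\sVec_{-j})}=j$; but lowering either one from $a$ to $a'$ changes the type‑$l$ histogram in exactly the same way, so the two resulting profiles have the same histogram, hence the same value vector, hence the same output of $\opta{\cdot}$ --- forcing $i=j$. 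Thus the map is injective and $|B(\sVec)|\le\nExperts\binom{k}{2}+1$.

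The main obstacle is precisely this injectivity step, which is exactly where the structure of the model is used: because valuations depend only on the per‑type histograms, reducing a type‑$l$ signal‑$a$ bidder to signal $a'$ yields a profile whose welfare‑maximizer depends only on $(l,a,a')$ and $\sVec$, not on which bidder was reduced --- so each ``kind'' of downward signal move can expose at most one potentially‑optimal bidder beyond $\opta{\sVec}$ itself, matching the count in the lower bound of Theorem~\ref{thm:genlb}. The remaining ingredient --- decomposing the monotone feasible rule $x$ into a convex combination of deterministic monotone feasible rules (so that the randomized mechanism is a lottery over deterministic IC‑IR auctions with threshold prices) --- is routine, either by invoking a standard decomposition of implementable randomized allocations or by observing that the polytope of monotone feasible single‑item allocations has integral vertices.
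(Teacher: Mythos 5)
Your proposal reconstructs essentially the same mechanism as the paper's proof: you both allocate probability $1/\rho$ (with $\rho = \nExperts\binom{k}{2}+1$, modulo the paper's $\min\{n,\cdot\}$ refinement) to every bidder who is welfare-optimal at the given profile or would become so upon lowering her own signal, and you both pair this monotone allocation with threshold pricing and appeal to a standard ``monotone $\Rightarrow$ universally IC-IR'' step. The one place where you go somewhat further than the paper is the feasibility bound: the paper counts ``distinct signals obtainable by some bidder reducing her signal'' and sums $\sum_{\nExperts'=1}^{\nExperts}\sum_{k'=2}^{k}(k'-1)=\nExperts\binom{k}{2}$ somewhat informally, while your explicit injection of $B(\sVec)$ into $\{\mathrm{null}\}\cup\{(l,a,a')\}$, together with the observation that $\opta{\cdot}$ must be a function of the per-type histogram (so two same-type, same-signal bidders lowered to the same value cannot both be selected), makes precise exactly why this count is correct; this is a genuine, if small, tightening of the written argument. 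Both you and the paper leave the final decomposition of the $\{0,1/\rho\}$-valued monotone rule into a lottery over deterministic monotone rules as ``well known''/``routine'' rather than proving it, so on that point your proposal is at the same level of rigor as the paper's own proof.
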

  \begin{proof}
 We prove this result using the randomized allocation rule constructed using the following process: $\alloc{i}{\sVec}$ is the probability that we allocate the item to bidder $i$ given the signal profile $\sVec$. 
 
 \begin{algorithm}[H]
\label{alg:allocfunc}
\SetAlgorithmName{Allocation}{}{}
\SetAlgoLined
\DontPrintSemicolon
\SetAlgoNoEnd
 $\rho \gets \min\{n,\nExperts \binom{k}{2}+1\}$\;
 $\alloc{i}{\sVec} \gets 0$ for all $i$ and $\sVec$\;
 \For{each signal profile $\sVec$}{
    $i \gets \opta{\sVec}$\;
    \If{$\alloc{i}{\sVec}=0$}{
    $\alloc{i}{\sVec} \gets 1/\rho$\;
  
     \For{\text{each signal profile $(s'_i, \sVec_{-i})$ with $s'_i > s_i$}}
     {
        $\alloc{i}{s'} \gets 1/\rho$\;
    }}
 }
 \Return $x$\;

\end{algorithm}

 We proceed with proving that this allocation is feasible, approximates the social welfare and finally design an universally IC-IR mechanism by choosing an appropriate pricing scheme.
 
\noindent{\bf Feasibility: } First, we need to show that for any signal profile $\sVec$ the sum of the allocation probabilities does not exceed $1$, i.e., $\sum_{i\in N} x_i(\sVec)\leq 1$.  Each of these bidders' allocation   probability is upper bounded by $(\nExperts \binom{k}{2}+1)^{-1}$. As a result, we complete our proof by bounding the number of bidders with strictly positive allocation by $\nExperts \binom{k}{2}+1$ which implies that our allocation is feasible.  If $ n<  \nExperts \binom{k}{2}+1$ then the result follows trivially. Assume $n>  \nExperts \binom{k}{2}+1$ and consider 
 a bidder and signal $s_i$ in $\sVec$. If this bidder is allocated positive probability it means that the bidder is either the bidder with the maximum value at $\sVec$ (allocated via steps 5 and 6) or was optimal at some signal $(s'_i,\sVec_{-i})$ with $s'_i<s_i$ (allocated via steps 8 and 9 at profile $(s'_i,\sVec_{-i})$). 
 The number of bidders falling to the latter category is at most the number of distinct signals that can be obtained by some bidder reducing their signal. There is $k'-1$ ways of reducing a signal with value $k'$ for every expertise type in $\{1,\dots,\nExperts\}$. Summing over all expertise types and signal levels at $\sVec$ we get

$$
\sum_{\nExperts'=1}^{\nExperts} \sum_{k'=2}^k (k'-1)  = \nExperts \binom{k}{2} 
$$
 
Adding the bidder that is optimal at $\sVec$ we get that the number of bidders with strictly positive probability at $\sVec$ is  $\nExperts \binom{k}{2} +1$.

\noindent{\bf Approximation:} If we set  $\rho = \min\{n,\nExperts{k \choose 2}+1\}$ the allocation is feasible and since at every signal  
profile $\sVec$ we allocate to $\opta{\quality{\sVec}}$ with 
probability $ (\min\{n,\nExperts{k \choose 2}+1\})^{-1}$ the 
approximation ratio follows.

 \noindent{\bf Incentives: } Note  each bidder is allocated with probability   either $0$ or  $1/\rho$. If a bidder is ever allocated an allocation probability $1/\rho$ then the mechanism assigns also an allocation probability for all signal profiles obtained by increasing this bidder's signal in the second loop. This implies that the allocation is monotone. It is well known that we can turn any monotone allocation to a universally ex-post IC-IR mechanism using the payment identity. Since our allocation function assigns allocation probability $0$ or $1/\rho$ at any signal profile the payment function should be: if a bidder $i$ is allocated the item at signal profile $\sVec$ then charge her $\val{i}{\quality{(t,\sVec_{-i}}}$ where $t$ is the smallest sign al  
such that $\alloc{i}{(t,\sVec_{-i}}=1/\rho$.  
\qed
\end{proof}
 
\begin{theorem}\label{thm:genrevenue}
There exists a universally ex-post IC-IR auction 
that obtains revenue that is a $k\cdot (\nExperts\binom{k}{2} +1)$ approximation to the social welfare.
\end{theorem}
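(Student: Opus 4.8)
The plan is to retain the allocation rule of Theorem~\ref{thm:genwelfare} verbatim (Allocation~\ref{alg:allocfunc} with $\rho = \min\{n, \nExperts\binom{k}{2}+1\}$) and to modify only its payment rule, in direct analogy with the way Theorems~\ref{thm:revTheorem} and~\ref{thm:additiveKRevenue} derived revenue guarantees from the corresponding welfare-optimal auctions. Recall that in the mechanism of Theorem~\ref{thm:genwelfare}, whenever a bidder $i$ is selected to receive the good at profile $\sVec$ she is charged the threshold price $\val{i}{\quality{t, \sVec_{-i}}}$, where $t$ is the smallest signal with $\alloc{i}{t, \sVec_{-i}} = 1/\rho$; by the structure of Allocation~\ref{alg:allocfunc}, bidder $i$ has positive allocation probability at $(s_i, \sVec_{-i})$ precisely for $s_i \in \{t, t+1, \dots, k-1\}$, a set of at most $k$ signal values which contains $i$'s true signal whenever she is allocated.

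The modification is the following. When bidder $i$ is selected at $\sVec$, the mechanism draws a signal value $t'$ uniformly at random from $\{t, t+1, \dots, k-1\}$ and offers $i$ the good at a take-it-or-leave-it price $\val{i}{\quality{t', \sVec_{-i}}}$, serving her if and only if $\val{i}{\quality{\sVec}} \ge \val{i}{\quality{t', \sVec_{-i}}}$. I would first argue this is still universally ex-post IC-IR: individual rationality is immediate since a bidder may always decline the offer, and for incentive compatibility note that, conditioned on $i$ being selected, the distribution of the offered price depends only on $\sVec_{-i}$ (through $t$), so $i$'s report only controls whether she clears the threshold $t$ at all --- every report $s_i' \ge t$ induces the same randomized offer, which she accepts exactly when it is profitable given her true value, so truthful reporting is weakly optimal. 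As in the proofs of Theorems~\ref{thm:genwelfare} and~\ref{thm:revTheorem}, each deterministic realization underlying this randomized mechanism inherits monotonicity from the allocation rule, so the payment identity legitimately produces a universally IC-IR mechanism.

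Finally I would carry out the revenue computation. Condition on the event $E$ that the allocation selects the optimal bidder $i^* = \opta{\sVec}$; by Theorem~\ref{thm:genwelfare} this happens with probability at least $1/\rho \ge 1/(\nExperts\binom{k}{2}+1)$. Given $E$, bidder $i^*$ is offered the good at price $\val{i^*}{\quality{t', \sVec_{-i^*}}}$ with $t'$ uniform over a set of size at most $k$ that contains $s_{i^*}$, so with probability at least $1/k$ we draw $t' = s_{i^*}$; in that case $\quality{t', \sVec_{-i^*}} = \quality{\sVec}$, the offered price equals $\val{i^*}{\quality{\sVec}}$ (the optimal welfare), and $i^*$ accepts. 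Hence the expected revenue is at least $\frac{1}{\nExperts\binom{k}{2}+1}\cdot\frac{1}{k}\cdot\val{i^*}{\quality{\sVec}}$, which is the desired $k(\nExperts\binom{k}{2}+1)$-approximation to the optimal welfare; when $n < \nExperts\binom{k}{2}+1$ the factor $1/\rho = 1/n$ only makes the bound stronger.

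I expect the only delicate point to be the bookkeeping that replacing the fixed threshold price by a randomized take-it-or-leave-it price keeps every deterministic realization monotone, and hence the overall mechanism universally ex-post IC-IR; this, however, is essentially the same argument already used in Theorems~\ref{thm:genwelfare} and~\ref{thm:revTheorem}, and the remaining probability calculation is routine once one notices that the winner's signal lies among at most $k$ equally likely candidates, which is precisely where the extra factor of $k$ comes from.
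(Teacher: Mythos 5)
Your proposal is correct and follows essentially the same approach as the paper's own proof: both retain the allocation of Theorem~\ref{thm:genwelfare} and replace the threshold payment with a take-it-or-leave-it price at a uniformly random $t' \in \{t,\dots,k-1\}$, arguing incentive compatibility by observing that the offer distribution is independent of the winner's report beyond clearing $t$, and deriving the revenue bound by conditioning on selecting the optimal bidder (probability $\geq 1/\rho$) and drawing $t'$ equal to her true signal (probability $\geq 1/k$). Your write-up is slightly more explicit about why the randomized offer preserves universal IC-IR, but the mechanism and the two-factor probability calculation are identical to the paper's.
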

  \begin{proof}
  Consider the following mechanism:
 Given a signal $\sVec$, let $i$ be the winner according to allocation 
  defined in the proof of Theorem~\ref{thm:genwelfare} that approximates social welfare. Recall that to obtain a universally IC-IR mechanism we offered the good to the bidder at the price $\val{i}{t,\sVec_{-i}}$ where $t$ is the lowest signal that bidder $i$ can have to win the item with a strictly positive probability. 
  Instead we choose to offer a take-it-or-leave-it  price $\val{i}{\hat{t},\sVec_{-i}}$ where $\hat{t}$ is a random signal in $\{t,\dots,k-1\}$. 
  
  \noindent{\bf Incentives:}
  We argue that this mechanism is universally IC-IR.  We consider two cases
  
  \begin{itemize}
      \item $s_i<t:$ The bidder is never allocated the item. If the bidder raised her signal then price that was offered by the mechanism would be at least  $\val{i}{t,\sVec_{-i}}> \val{i}{\sVec}$. 
      \item $s_i\geq t:$ If the bidder lowers her signal then the item is never allocated and she gets a utility of zero. If the bidder raises her signal then this does not effect the randomized take-it-or-leave it price offered by the mechanism. In any case the deviation is not profitable. 
  \end{itemize}

  \noindent {\bf Revenue:} Conditioned on the optimal bidder being allocated the item in the original allocation the bidder accepts the price 
 that equals her true value $\val{i}{\sVec}$  with probability at least $1/k$. Since these two events are independent with probability 
 $1/(k( \nExperts \binom{k}{2} +1))$ our revenue equals the social welfare. 
 \qed 

 \end{proof}

%
% ---- Bibliography ----
%
% BibTeX users should specify bibliography style 'splncs04'.
% References will then be sorted and formatted in the correct style.
%
\bibliographystyle{splncs04}
\bibliography{interdependentbibliography}
%
%\begin{thebibliography}{8}
%\bibitem{ref_article1}
%Author, F.: Article title. Journal \textbf{2}(5), 99--110 (2016)

%\bibitem{ref_lncs1}
%Author, F., Author, S.: Title of a proceedings paper. In: Editor,
%F., Editor, S. (eds.) CONFERENCE 2016, LNCS, vol. 9999, pp. 1--13.
%Springer, Heidelberg (2016). \doi{10.10007/1234567890}

%\bibitem{ref_book1}
%Author, F., Author, S., Author, T.: Book title. 2nd edn. Publisher,
%Location (1999)

%\bibitem{ref_proc1}
%Author, A.-B.: Contribution title. In: 9th International Proceedings
%on Proceedings, pp. 1--2. Publisher, Location (2010)

%\bibitem{ref_url1}
%LNCS Homepage, \url{http://www.springer.com/lncs}. Last accessed 4
%Oct 2017
%\end{thebibliography}

\end{document}